\newtheorem{definition}{Definition}
\newtheorem{proposition}[definition]{Proposition}
\newtheorem{lemma}[definition]{Lemma}
\newtheorem{theorem}[definition]{Theorem}
\newtheorem{corollary}[definition]{Corollary}
\newtheorem{conjecture}[definition]{Conjecture}
\newtheorem{remark}[definition]{Remark}
\newtheorem{example}[definition]{Example}
\newtheorem{question}[definition]{Question}
\def\bcj{\begin{conjecture}}
\def\ecj{\end{conjecture}}
\def\bcr{\begin{corollary}}
\def\ecr{\end{corollary}}
\def\bd{\begin{definition}}
\def\ed{\end{definition}}
\def\bea{\begin{eqnarray}}
\def\eea{\end{eqnarray}}
\def\bem{\begin{enumerate}}
\def\eem{\end{enumerate}}
\def\bex{\begin{example}}
\def\eex{\end{example}}
\def\bim{\begin{itemize}}
\def\eim{\end{itemize}}
\def\bl{\begin{lemma}}
\def\el{\end{lemma}}
\def\bma{\begin{bmatrix}}
\def\ema{\end{bmatrix}}
\def\bpf{\begin{proof}}
\def\epf{\end{proof}}
\def\bpp{\begin{proposition}}
\def\epp{\end{proposition}}
\def\bqu{\begin{question}}
\def\equ{\end{question}}
\def\br{\begin{remark}}
\def\er{\end{remark}}
\def\bt{\begin{theorem}}
\def\et{\end{theorem}}
\def\squareforqed{\hbox{\rlap{$\sqcap$}$\sqcup$}}
\def\qed{\ifmmode\squareforqed\else{\unskip\nobreak\hfil
\penalty50\hskip1em\null\nobreak\hfil\squareforqed
\parfillskip=0pt\finalhyphendemerits=0\endgraf}\fi}
\def\endenv{\ifmmode\;\else{\unskip\nobreak\hfil
\penalty50\hskip1em\null\nobreak\hfil\;
\parfillskip=0pt\finalhyphendemerits=0\endgraf}\fi}
\newenvironment{proof}{\noindent \textbf{{Proof.~} }}{\qed}
\def\Dbar{\leavevmode\lower.6ex\hbox to 0pt
{\hskip-.23ex\accent"16\hss}D}
\def\url@leostyle{%
  \@ifundefined{selectfont}{\def\UrlFont{\sf}}{\def\UrlFont{\small\ttfamily}}}
\def\bcj{\begin{conjecture}}
\def\ecj{\end{conjecture}}
\def\bcr{\begin{corollary}}
\def\ecr{\end{corollary}}
\def\bd{\begin{definition}}
\def\ed{\end{definition}}
\def\bea{\begin{eqnarray}}
\def\eea{\end{eqnarray}}
\def\bem{\begin{enumerate}}
\def\eem{\end{enumerate}}
\def\bex{\begin{example}}
\def\eex{\end{example}}
\def\bim{\begin{itemize}}
\def\eim{\end{itemize}}
\def\bl{\begin{lemma}}
\def\el{\end{lemma}}
\def\bpf{\begin{proof}}
\def\epf{\end{proof}}
\def\bpp{\begin{proposition}}
\def\epp{\end{proposition}}
\def\bqu{\begin{question}}
\def\equ{\end{question}}
\def\br{\begin{remark}}
\def\er{\end{remark}}
\def\bt{\begin{theorem}}
\def\et{\end{theorem}}
\def\btb{\begin{tabular}}
\def\etb{\end{tabular}}
\newcommand{\nc}{\newcommand}
\def\a{\alpha}
\def\b{\beta}
\def\d{\delta}
\def\e{\epsilon}
\def\i{\iota}
\def\l{\lambda}
\def\r{\rho}
\def\s{\sigma}
 \nc{\bbA}{\mathbb{A}} \nc{\bbB}{\mathbb{B}} \nc{\bbC}{\mathbb{C}}
 \nc{\bbD}{\mathbb{D}} \nc{\bbE}{\mathbb{E}} \nc{\bbF}{\mathbb{F}}
 \nc{\bbG}{\mathbb{G}} \nc{\bbH}{\mathbb{H}} \nc{\bbI}{\mathbb{I}}
 \nc{\bbJ}{\mathbb{J}} \nc{\bbK}{\mathbb{K}} \nc{\bbL}{\mathbb{L}}
 \nc{\bbM}{\mathbb{M}} \nc{\bbN}{\mathbb{N}} \nc{\bbO}{\mathbb{O}}
 \nc{\bbP}{\mathbb{P}} \nc{\bbQ}{\mathbb{Q}} \nc{\bbR}{\mathbb{R}}
 \nc{\bbS}{\mathbb{S}} \nc{\bbT}{\mathbb{T}} \nc{\bbU}{\mathbb{U}}
 \nc{\bbV}{\mathbb{V}} \nc{\bbW}{\mathbb{W}} \nc{\bbX}{\mathbb{X}}
 \nc{\bbZ}{\mathbb{Z}}
 \nc{\bA}{{\bf A}} \nc{\bB}{{\bf B}} \nc{\bC}{{\bf C}}
 \nc{\bD}{{\bf D}} \nc{\bE}{{\bf E}} \nc{\bF}{{\bf F}}
 \nc{\bG}{{\bf G}} \nc{\bH}{{\bf H}} \nc{\bI}{{\bf I}}
 \nc{\bJ}{{\bf J}} \nc{\bK}{{\bf K}} \nc{\bL}{{\bf L}}
 \nc{\bM}{{\bf M}} \nc{\bN}{{\bf N}} \nc{\bO}{{\bf O}}
 \nc{\bP}{{\bf P}} \nc{\bQ}{{\bf Q}} \nc{\bR}{{\bf R}}
 \nc{\bS}{{\bf S}} \nc{\bT}{{\bf T}} \nc{\bU}{{\bf U}}
 \nc{\bV}{{\bf V}} \nc{\bW}{{\bf W}} \nc{\bX}{{\bf X}}
 \nc{\bZ}{{\bf Z}}
\nc{\cA}{{\cal A}} \nc{\cB}{{\cal B}} \nc{\cC}{{\cal C}}
\nc{\cD}{{\cal D}} \nc{\cE}{{\cal E}} \nc{\cF}{{\cal F}}
\nc{\cG}{{\cal G}} \nc{\cH}{{\cal H}} \nc{\cI}{{\cal I}}
\nc{\cJ}{{\cal J}} \nc{\cK}{{\cal K}} \nc{\cL}{{\cal L}}
\nc{\cM}{{\cal M}} \nc{\cN}{{\cal N}} \nc{\cO}{{\cal O}}
\nc{\cP}{{\cal P}} \nc{\cQ}{{\cal Q}} \nc{\cR}{{\cal R}}
\nc{\cS}{{\cal S}} \nc{\cT}{{\cal T}} \nc{\cU}{{\cal U}}
\nc{\cV}{{\cal V}} \nc{\cW}{{\cal W}} \nc{\cX}{{\cal X}}
\nc{\cZ}{{\cal Z}}
\nc{\hA}{{\hat{A}}} \nc{\hB}{{\hat{B}}} \nc{\hC}{{\hat{C}}}
\nc{\hD}{{\hat{D}}} \nc{\hE}{{\hat{E}}} \nc{\hF}{{\hat{F}}}
\nc{\hG}{{\hat{G}}} \nc{\hH}{{\hat{H}}} \nc{\hI}{{\hat{I}}}
\nc{\hJ}{{\hat{J}}} \nc{\hK}{{\hat{K}}} \nc{\hL}{{\hat{L}}}
\nc{\hM}{{\hat{M}}} \nc{\hN}{{\hat{N}}} \nc{\hO}{{\hat{O}}}
\nc{\hP}{{\hat{P}}} \nc{\hR}{{\hat{R}}} \nc{\hS}{{\hat{S}}}
\nc{\hT}{{\hat{T}}} \nc{\hU}{{\hat{U}}} \nc{\hV}{{\hat{V}}}
\nc{\hW}{{\hat{W}}} \nc{\hX}{{\hat{X}}} \nc{\hZ}{{\hat{Z}}}
\nc{\hn}{{\hat{n}}}
\def\max{\mathop{\rm max}}
\def\min{\mathop{\rm min}}
\def\tr{\mathop{\rm Tr}}
\def\SU{{\mbox{\rm SU}}}
\newcommand{\bra}[1]{\langle#1|}
\newcommand{\ket}[1]{|#1\rangle}
\newcommand{\proj}[1]{| #1\rangle\!\langle #1 |}
\newcommand{\ketbra}[2]{|#1\rangle\!\langle#2|}
\def\Dbar{\leavevmode\lower.6ex\hbox to 0pt
{\hskip-.23ex\accent"16\hss}D}
\begin{document}

\title{Detection of genuine tripartite entanglement by two bipartite entangled states}


\date{\today}

\pacs{03.65.Ud, 03.67.Mn}

\author{Yize Sun}\email[]{sunyize@buaa.edu.cn}
\affiliation{School of Mathematical Sciences, Beihang University, Beijing 100191, China}

\author{Lin Chen}\email[]{linchen@buaa.edu.cn (corresponding author)}
\affiliation{School of Mathematical Sciences, Beihang University, Beijing 100191, China}
\affiliation{International Research Institute for Multidisciplinary Science, Beihang University, Beijing 100191, China}

\begin{abstract}
It is an interesting problem to construct genuine tripartite entangled states based on the collective use of two bipartite entangled  states. We  consider the case that the states are two-qubit Werner states, 
we construct the interval of parameter of Werner states such that the tripartite state is genuine entangled. Further, we present the way of detecting the tripartite genuine entanglement using current techniques in experiments. We also investigate the lower bound of genuine multipartite entanglement concurrence.

\end{abstract}

\maketitle


\section{Introduction}
Genuine multipartite entanglement offers significant advantages compared with bipartite entanglement in quantum tasks \cite{Horodecki2007Quantum,Nielsen2011Quantum, Divincenzo1995Quantum}. 
 Multipartite private states from which secret keys are directly accessible to trusted partners are genuinely multipartite entangled states \cite{das2019universal}.
Furthermore, genuine entanglement (GE) is the basic ingredient in measurement-based quantum computation \cite{Briegel2009Measurement,RaussendorfA}, and is beneficial in various quantum communication protocols \cite{DeQuantum}. 
Many efforts have been devoted towards the detection of genuine entanglement. For instance, GE can be computed efficient by the generalized geometric measure \cite{roy2019computable}, a series of linear and nonlinear entanglement witnesses  \cite{sun2019improved,HuberDetection,HuberWitnessing,DeMultipartite,Huber2013Entropy,SperlingMultipartite,Augusiak_2009,JungnitschTaming,CoffmanDistributed}, generalized concurrence  \cite{MaMeasure,ChenImproved,Hong2012Measure,Gao2014On}  and Bell-like inequalities \cite{BancalDevice}.    
Although these methods were derived and a characterization in terms of semidefinite programs was developed \cite{Lancien_2015}, the problem of detecting GE remains far from being satisfactorily solved.


\begin{figure}[t]
	\includegraphics[scale=0.4,angle=0]{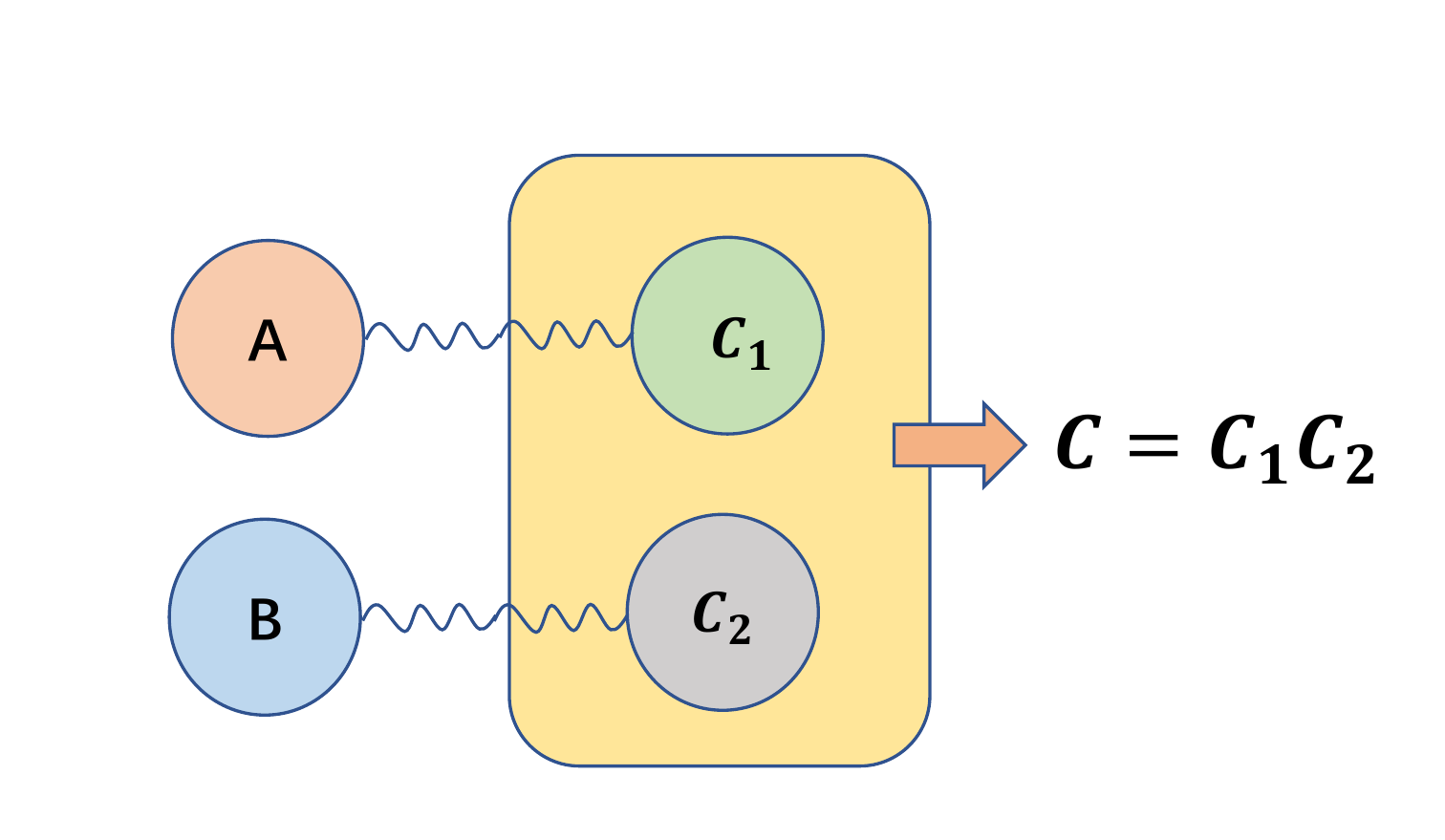}
	\caption{Two bipartite states of systems $A, C_1$ and $B, C_2$ can construct a tripartite state $\r_{ABC}$ of systems $A, B$ and $C$, where $C=C_1C_2$. If $\a_{AC_1}$ and $\b_{BC_2}$ are entangled, then we conjecture that $\r_{ABC}$ is a tripartite genuine entangled state. We partially prove the conjecture in Theorem \ref{eq:theorem5} by showing that $\r_{ABC}$ is a tripartite gunuine entangled state for the interval [-1,-0.94066] of parameter of Werner states.
	}
\label{fig:ptri}
\end{figure}


In the one-dimensional cluster-Ising model \cite{Giampaolo_2014}, the genuine tripartite entanglement between three adjacent spins is the only source of genuine multipartite entanglement.
The tripartite state is either a GE state or a biseparable state. The determination of bipartite entanglement has much more tools in theory and experiments  \cite{VanMultipartite,WuQuantum} than that of tripartite states \cite{HuberDetection}. It is known that all positive partial transpose (PPT) state  are never distillable, so distillable state must be non-positive partial transpose (NPT) state \footnote{If a quantum state is still positive after partial transpose, then the state is PPT. Otherwise, it is NPT.}. The distillability problem is a main open problem of quantum information. It asks whether bipartite NPT states can be asymptotically converted into pure entangled states under local operations and classical communications (LOCC)  \cite{DivincenzoEvidence,D1999Distillability}. To solve the problem,  there are several attempts by converting states into Werner states by LOCC \cite{KrausCharacterization,Bandyopadhyay,ViannaDistillability}. On the other hand, progress towards distilling entangled states of given dimensions or deficient rank has been made steadily. For example, in \cite{ChenDistillability}, it shows that a bipartite NPT quantum state of rank four is distillable.   
 Besides, any biseparable state is a PPT mixture \cite{JungnitschTaming}.
 As far as we know, the relation between genuine tripartite entangled state and NPT or PPT states is not well studied.

In this paper,  we investigate the tripartite GE of tensor product of two NPT Werner states  $\r_w(p_1,d)$, $\r_w(p_2,d)$. In \cite{DivincenzoEvidence}, if Werner states form $\r_w(p,d)$ were distillable then it equals to NPT states would be distillable through the reductions.
Then we present the region of detecting GE of constructing tripartite state $\r_{ABC}$ containing $\r_w(p_1,2)_{AC_1}\otimes\r_w(p_2,2)_{BC_2}$ of systems $A$, $B$ and $(C_1C_2)$ in Theorem \ref{eq:theorem5}. Besides, we present the region of parameter $p$ for detecting GE of tensor product of two Werner states in Theorem \ref{eq:theorem5}. There exist a neighborhood $h=[0,0.05934)$ such that $\r_w(-1+\e,2)_{AC_1}\otimes\r_w(-1+\e,2)_{BC_2}$ is a tripartite genuine entangled state for all $\e\in[0,h]$. Then we discuss the realization of Theorem \ref{eq:theorem5} in experienment, and investigate Conjecture \ref{conj:1} when one of $\r_w(p_1,2)_{AC_1}$ and $\r_w(p_2,2)_{BC_2}$ is a PPT entangled state.
We also correct the lower bound for genuine multipartite entanglement concurrence in Theorem \ref{thm:theorem3} and apply the method in Example \ref{eq:example2} and Example \ref{eq:example3}. Then we use it to detect the GE of $\r_w(p_1,2)_{AC_1}\otimes\r_w(p_2,2)_{BC_2}$ in Appendix \ref{eq:appendix2}. 
If the conjecture is true, then we can construct genuine tripartite entanglement state from two Werner states. Furthermore, it is a special case of constructing an $(n+2)$-patite genuine entanglement state from two $(n+1)$-partite states.

In quantum information theory, the strong subadditivity plays a crucial role in nearly every nontrivial insight \cite{LiebProof,HaydenStructure,Nielsen2007Quantum}. For a tripartite state $\r_{ABC}$, the inequality says
$
S(\r_{AC})+S(\r_{BC})\geq S(\r_{ABC})+S(\r_C),
$
where $S(\a)$ is the von Neumann entropy of quantum state $\a$. The inequality is saturated when there exists a decomposition of system $C$ as
$\mathcal{H}_C=\mathcal{H}_{C_1}\otimes \mathcal{H}_{C_2}$ into a tensor product $\r_{ABC}=\r_{AC_1}\otimes\r_{BC_2}$, where $\r_{AC_1}\in\mathcal{B}(\mathcal{H}_A\otimes\mathcal{H}_{C_1})$, $\r_{BC_2}\in\mathcal{B}(\mathcal{H}_B\otimes\mathcal{H}_{C_2})$. Indeed it is a special case of the necessary and sufficient condition constructed in \cite{HaydenStructure}. Studying the conjecture in Fig. \ref{fig:ptri} thus may help understand the relation between the genuine entanglement and strong subaddivity.

The rest of this paper is organized as follows. In Sec. \ref{sec:pre} we introduce the preliminary knowledge used in this paper. In Sec. \ref{sec:lower} we present our results on the detection of genuine entanglement. In Sec. \ref{sec:4} we investigate the lower bound of GE concurrence. This paper ends up with a conculsion  in Sec. \ref{sec:con}.

\section{Preliminaries}
\label{sec:pre}
In this section we introduce the preliminary knowledge used in this paper. First, genuinely multiparty entangled state is a particularly useful notion in the theory of entanglement but also have found an application, for example, in quantum error correction and cryptography. Besides, genuinely multiparty entangled states can construct genuinely entangled subspace  \cite{demianowicz2019approach}. Next, we define the notations of generators of special unitary group $\SU(d)$ and the $k$ norm for $m\times n$ matrix. Then we present two methods to detect GE of quantum state in Theorem \ref{eq:theorem2} and Theorem \ref{eq:theorem1}.


We review the definition of genuinely multiparty entangled state. A multipartite quantum state that is not separable with respect to any bipartition is said to be genuine multipartite entangled \cite{G2009Entanglement}. Denote $\mathcal{H}_i^d$ as $d$-dimensional Hilbert spaces, $i=1,2,\cdots,n$. An $n$-partite pure state $\ket{\psi}\in\mathcal{B}(\mathcal{H}_1\otimes\mathcal{H}_2\otimes\cdots\otimes\mathcal{H}_n)$ is called biseparable if it can be written as
\begin{eqnarray}
\ket{\psi}=\ket{\psi_{S_1}}\otimes\ket{\psi_{S_2}},
\end{eqnarray}
where $\ket{\psi_{S_1}}\in\mathcal{B}(\mathcal{H}_{S_1})=\mathcal{B}(\mathcal{H}_{s_1}\otimes\mathcal{H}_{s_2}\otimes\cdots\otimes\mathcal{H}_{s_k})$, $\ket{\psi_{S_2}}\in\mathcal{B}(\mathcal{H}_{S_2})=\mathcal{B}(\mathcal{H}_{s_{k+1}}\otimes\mathcal{H}_{s_{k+2}}\otimes\cdots\otimes\mathcal{H}_{s_n})$, and $\{s_1, \cdots, s_k|s_{k+1}, \cdots, s_{n}\}$ is a particular order of $\{1,2,\cdots,n\}$. An $n$-partite mixed state $\r$ is
biseparable if it can be written as a convex combination
of biseparable pure states
\begin{eqnarray}
\r=\sum_ip_i\ketbra{\psi_i}{\psi_i},
\end{eqnarray}
where $0<p_i\leq1$, $\sum p_i=1$, and $\ket{\psi_i}$ is biseparable with respect to different bipartitions. Otherwise, it is called genuinely $n$-partite entangled.	


Next, let $\l_i,i=1,\cdots,d^2-1$, denote the generator of the special unitary group $\SU(d)$ \cite{Kimura2003The}. For example, when $d=3$, the generators $\l$-matrices of the unimodular unitary group $\SU(3)$ are as follows,
\begin{eqnarray}
\l_1=
\bma
0&1&0\\
1&0&0\\
0&0&0
\ema,
\quad
\l_2=
\bma
0&-i&0\\
i&0&0\\
0&0&0
\ema,
\quad
\l_3=
\bma
1&0&0\\
0&-1&0\\
0&0&0
\ema,\nonumber
\end{eqnarray}
\begin{eqnarray}
\l_4=
\bma
0&0&1\\
0&0&0\\
1&0&0
\ema,
\quad
\l_5=
\bma
0&0&-i\\
0&0&0\\
i&0&0
\ema,
\quad
\l_6=
\bma
0&0&0\\
0&0&1\\
0&1&0
\ema,\nonumber
\end{eqnarray}
\begin{eqnarray}
\l_7=
\bma
0&0&0\\
0&0&-i\\
0&i&0
\ema,
\quad
\l_8=\frac{1}{\sqrt{3}}
\bma
1&0&0\\
0&1&0\\
0&0&-2
\ema,
\end{eqnarray}
and $\tr(\l_k\l_l)=2\d_{kl}$, $k,l=1,2,\cdots,8$. Besides, the generators of $\SU(d)$ are the elements of Bloch vector, which gives the desirable description of the states for $N$-level systems.
Let $I$ be the $d\times d$ identity matrix. Any $\r\in\mathcal{B}( \mathcal{H}_1^d\otimes \mathcal{H}_2^d\otimes \mathcal{H}_3^d)$ can be represented as follows,
\begin{eqnarray}
\label{eq:defr}
\r
&=&
\frac{1}{d^3}I\otimes I\otimes I+\frac{1}{2d^2}(\sum t_i^1\l_i\otimes I\otimes I\nonumber\\
&+&
\sum t_j^2I\otimes \l_j\otimes I+\sum t_k^3I\otimes I\otimes \l_k)\nonumber
\\
&+&
\frac{1}{4d}(\sum t_{ij}^{12}\l_i\otimes \l_j\otimes I+\sum t_{ik}^{13}\l_i\otimes I\otimes \l_k\nonumber
\\
&+&
\sum t_{jk}^{23}I\otimes\l_j\otimes\l_k)+\frac{1}{8}\sum t_{ijk}^{123}\l_i\otimes\l_j\otimes\l_k.
\end{eqnarray}

We know that $\tr(\l_k\l_l)=2\d_{kl}$ from the second section of \cite{Simon2009The}. Then according to (\ref{eq:defr}), we obtain that $t_i^1=\tr(\r\l_i\otimes I\otimes I)$. Similarly, $t_j^2=\tr(\r I\otimes\l_j\otimes I)$, $t_k^3=\tr(\r I\otimes I\otimes\l_k)$, $t_{ij}^{12}=\tr(\r\l_i\otimes \l_j\otimes I)$, $t_{ik}^{13}=\tr(\r\l_i\otimes I\otimes \l_k)$, $t_{jk}^{23}=\tr(\r I\otimes \l_j\otimes \l_k)$, and $t_{ijk}^{123}=\tr(\r\l_i\otimes \l_j\otimes \l_k)$. Set $T^{(1)},T^{(2)},T^{(3)},T^{(12)},T^{(13)},T^{(23)}$, and $T^{(123)}$ to be the vectors with the entries $t_i^1, t_j^2, t_k^3,t_{ij}^{12},t_{ik}^{13},t_{jk}^{23}$ and $t_{ijk}^{123}$, $i,j,k=1,2,\cdots,d^2-1$.
Let $T_{\underline{1}23}$, $T_{\underline{2}13}$ and $T_{\underline{3}12}$ be the matrices with entries $t_{i, (d^2-1)(j-1)+k}=t_{ijk}$, $t_{j, (d^2-1)(i-1)+k}=t_{ijk}$ and $t_{k, (d^2-1)(i-1)+j}=t_{ijk}$, respectively. Then we show three norms for an $m\times n$ matrix $M$ as follows,
\begin{eqnarray}
\|M\|
:=
\sqrt{\sum_{i,j}M_{i,j}^2}=\sqrt{\sum_{i}\s_{i}^2},
\end{eqnarray}
\begin{eqnarray}
\|M\|_{tr}
:=
\tr\sqrt{M^TM}=\sum_i\s_i,
\end{eqnarray}
\begin{eqnarray}
\|M\|_k:=\sum_{i=1}^k\s_i,
\end{eqnarray}
where $\{\s_i\}$ ($i=1,\cdots,\min(m,n)$) denote the singular values of the matrix, which are in non-increasing order. Notice that the last Ky Fan norm is the trace norm. 

Let $\|M\|_k=\sum_{i=1}^k\s_i$ denote the $k$ norm for an $n\times n$ matrix $M$, where $\s_i$, $i=1,\cdots,n$, are the singular values of $M$ in decreasing order.
After by presenting the definition of $M_k(\r)$, then we obtain Theorem \ref{eq:theorem2} in the following.
\begin{theorem}
\label{eq:theorem2}
Consider the average matricization norm
\begin{eqnarray}
M_k(\r)=\frac{1}{3}(\|T_{\underline{1}23}\|_k+\|T_{\underline{2}13}\|_k+\|T_{\underline{3}12\|_k})
\end{eqnarray}
for a tripartite qudit state $\r$. If it holds that
\begin{eqnarray}
\label{eq:ineq2}
M_k(\r)>\frac{2\sqrt{2}}{3}(2\sqrt{k}+1)\frac{d-1}{d}\sqrt{\frac{d+1}{d}}
\end{eqnarray}
for any $k=1,2,\cdots,d^2-1$, then $\r$ is a genuine multipartite entangled.
\end{theorem}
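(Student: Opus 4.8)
The plan is to prove the contrapositive: if $\r$ is biseparable, then $M_k(\r)\le\frac{2\sqrt2}{3}(2\sqrt k+1)\frac{d-1}{d}\sqrt{\frac{d+1}{d}}$ for every admissible $k$, so that the strict inequality in the statement forces $\r$ to be genuine multipartite entangled. Each of $\|T_{\underline{1}23}\|_k$, $\|T_{\underline{2}13}\|_k$, $\|T_{\underline{3}12}\|_k$ is a convex function of $\r$, because the correlation-tensor entries $t_{ijk}=\tr(\r\,\l_i\otimes\l_j\otimes\l_k)$ are linear in $\r$ and the Ky Fan norm $\|\cdot\|_k$ is a norm; hence $M_k$ is convex. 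A biseparable mixed state is a convex combination of pure states, each a product across one of the three bipartitions, so by convexity it suffices to bound $M_k$ on each such pure state; since the bound will come out the same for all three bipartition types, by the symmetry of $M_k$ under permuting the matricizations I may treat the representative case $\ket{\psi}=\ket{\phi}_A\otimes\ket{\chi}_{BC}$.

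For such a state I would first compute $t_{ijk}=\tr(\r_A\l_i)\,\tr(\r_{BC}\,\l_j\otimes\l_k)=a_ib_{jk}$, where $\r_A=\proj\phi$, $\r_{BC}=\proj\chi$, $a=(a_i)$ is the Bloch vector of $\r_A$, and $b=(b_{jk})$ is the two-body correlation tensor of $\r_{BC}$. The key structural fact is that $T_{\underline{1}23}$ is then the outer product of the vector $a$ with $\mathrm{vec}(b)$, hence has rank one, so $\|T_{\underline{1}23}\|_k=\|a\|\,\|b\|$ for every $k\ge1$. For the two remaining matricizations I would use only the elementary inequality $\|M\|_k\le\sqrt k\,\|M\|$ (Cauchy--Schwarz on the $k$ largest singular values) together with $\|T_{\underline{2}13}\|^2=\|T_{\underline{3}12}\|^2=\sum_{i,j,k}a_i^2b_{jk}^2=\|a\|^2\|b\|^2$, which gives $M_k(\proj\psi)\le\frac13(1+2\sqrt k)\,\|a\|\,\|b\|$.

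It then remains to estimate $\|a\|$ and $\|b\|$ from the purity of $\ket\phi$ and $\ket\chi$. From $\tr(\r_A^2)=1$ and $\tr(\l_i\l_j)=2\d_{ij}$ one obtains $\|a\|^2=2(1-\tfrac1d)=\tfrac{2(d-1)}{d}$; expanding $\r_{BC}$ in the bipartite Bloch basis and imposing $\tr(\r_{BC}^2)=1$ splits the purity as $1=\tfrac1{d^2}+(\text{non-negative one-body terms})+\tfrac14\|b\|^2$, so $\|b\|^2\le 4\big(1-\tfrac1{d^2}\big)=\tfrac{4(d^2-1)}{d^2}$. Multiplying the two estimates gives $\|a\|\,\|b\|\le\frac{2\sqrt2\,(d-1)}{d}\sqrt{\frac{d+1}{d}}$, and hence $M_k(\proj\psi)\le\frac{2\sqrt2}{3}(2\sqrt k+1)\frac{d-1}{d}\sqrt{\frac{d+1}{d}}$. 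By the convexity reduction the same bound holds for every biseparable $\r$, which is exactly the contrapositive of the claim.

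The routine inputs are the two purity identities and $\|M\|_k\le\sqrt k\,\|M\|$; the step that actually carries the argument is the observation that, for a state that is a product across some bipartition, exactly one of the three matricizations of the correlation tensor collapses to rank one. I expect the main care in the write-up to lie in the bookkeeping of which index is singled out in each matricization for each of the three bipartitions, in checking that the symmetrization defining $M_k$ makes the final bound independent of that choice (so that mixtures of pure states from different bipartitions are handled uniformly), and in confirming that $k\le d^2-1$ is precisely the number of available singular values so that $\|\cdot\|_k$ is well defined throughout.
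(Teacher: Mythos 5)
Your proof is correct, and the key computations check out: the rank-one collapse of the matricization aligned with the bipartition gives $\|T_{\underline{1}23}\|_k=\|a\|\,\|b\|$, the bound $\|M\|_k\le\sqrt{k}\,\|M\|$ handles the other two, and the purity estimates $\|a\|^2=\tfrac{2(d-1)}{d}$, $\|b\|^2\le\tfrac{4(d^2-1)}{d^2}$ reproduce exactly the constant $\tfrac{2\sqrt2}{3}(2\sqrt k+1)\tfrac{d-1}{d}\sqrt{\tfrac{d+1}{d}}$ (and its specializations used later in the paper for $d=4,k=4$ and $d=3,k=8$). The paper itself states Theorem \ref{eq:theorem2} without proof, importing it from the correlation-tensor literature; your argument is precisely the standard derivation underlying that result, so there is no substantive divergence to report.
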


In the following, by using Theorem \ref{eq:theorem2}, we shall detect the genuine entanglement of another type of tripartite states. This state is tensor product of two Werner states, which are invariant under all unitaries of the form $U\otimes U$ \cite{WernerQuantum}. Theorem \ref{eq:theorem2} can  detect GE not only for tripartite qubit systems but for any tripartite qudit system.

Next, denote $\|\cdot\|$ the Frobenius norm of a vector or a matrix $\|M\|=\|M\|_{KF}$ is just the Ky-Fan norm.	From Vicente criterion in \cite{Vicente2011Multipartite}, then we have Theorem \ref{eq:theorem1}.

\begin{theorem}
	\label{eq:theorem1}
 For an arbitrary tripartite qudit state it holds that
	\begin{eqnarray}
	\|T_{jlm}\|>\sqrt{\frac{8(d-1)(d^2-1)}{d^3}},
	\end{eqnarray}
	then  the state is genuine multipartite entangled.
\end{theorem}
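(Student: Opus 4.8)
The plan is to prove the contrapositive: I will show that every \emph{biseparable} tripartite qudit state $\r$ satisfies $\|T_{jlm}\|\le\sqrt{8(d-1)(d^2-1)/d^3}$, where $T_{jlm}$ is the three-body correlation tensor with entries $t^{123}_{jlm}=\tr(\r\,\l_j\ox\l_l\ox\l_m)$ and $\|\cdot\|$ is its Frobenius norm; any state exceeding this threshold is then necessarily genuine multipartite entangled. Since $T_{jlm}$ depends linearly on $\r$, for a mixed biseparable state $\r=\sum_i p_i\proj{\psi_i}$ the triangle inequality for the Frobenius norm gives $\|T_{jlm}(\r)\|\le\sum_i p_i\|T_{jlm}(\proj{\psi_i})\|$, so it suffices to bound $\|T_{jlm}\|$ for pure states that factor across one of the three cuts $1|23$, $2|13$, $3|12$. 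Because the target bound is symmetric under $1\lra2\lra3$, a single representative case, say $\r=\r_1\ox\r_{23}$ with a pure state on system $1$ and a pure state on systems $23$, handles all three.

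For such a product state the three-body correlations factorize, $t^{123}_{ijk}=\tr(\r_1\l_i)\,\tr(\r_{23}\,\l_j\ox\l_k)=t^1_i\,t^{23}_{jk}$, hence $\|T_{jlm}\|^2=\|T^{(1)}\|^2\,\|T^{(23)}\|^2$, and I will bound the two factors separately. The reduced state $\r_1$ is a pure single-qudit state, so $\tr\r_1^2=1$; writing this in the Bloch form $\r_1=\frac1dI+\frac12\sum t^1_i\l_i$ and using $\tr(\l_k\l_l)=2\d_{kl}$ forces $\|T^{(1)}\|^2=\frac{2(d-1)}{d}$ exactly. For the second factor I would put $\r_{23}=\proj{\chi}$ in Schmidt form and set $P=\tr\r_2^2=\tr\r_3^2\in[\tfrac1d,1]$; expanding $\tr\r_{23}^2=1$ in the bipartite Bloch representation, together with $\|T^{(2)}\|^2=\|T^{(3)}\|^2=2(P-\tfrac1d)$, yields $\|T^{(23)}\|^2=4+\frac4{d^2}-\frac{8P}{d}$, which is maximized at $P=\tfrac1d$ (the maximally entangled case), giving $\|T^{(23)}\|^2\le\frac{4(d^2-1)}{d^2}$. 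Multiplying, $\|T_{jlm}\|^2\le\frac{2(d-1)}{d}\cdot\frac{4(d^2-1)}{d^2}=\frac{8(d-1)(d^2-1)}{d^3}$, as claimed.

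The routine ingredients are the Bloch-algebra evaluations of $\tr\r_1^2$ and $\tr\r_{23}^2$ in terms of correlation-tensor norms and the partial-trace identities for the generators, all of which follow mechanically from $\tr(\l_k\l_l)=2\d_{kl}$. The one point that needs care is the reduction to a single bipartition: a mixed biseparable state is a convex combination of pure states that may be split across \emph{different} cuts, so the bound must be established uniformly over all three cuts, which is immediate here only because the right-hand side is cut-independent; and one must note that the two factors $\|T^{(1)}\|$ and $\|T^{(23)}\|$ genuinely decouple (the value of $\|T^{(1)}\|^2$ is fixed by purity regardless of $\r_{23}$), so no joint optimization is required. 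I would then conclude by the contrapositive, remarking that the argument is valid for every local dimension $d$ and that for $d=2$ it recovers the familiar de Vicente-type threshold $\sqrt3$, saturated by a qubit tensored with a Bell state.
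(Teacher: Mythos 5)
Your proof is correct. Note that the paper itself supplies no proof of this statement: it is imported verbatim from the de Vicente--Huber correlation-tensor criterion (the reference cited as \cite{Vicente2011Multipartite}), so there is nothing in the text to compare against. Your reconstruction is essentially the standard derivation from that reference: linearity of $T^{(123)}$ plus the triangle inequality reduces the problem to pure states biseparable across one cut; factorization $t^{123}_{ijk}=t^1_i t^{23}_{jk}$ gives $\|T^{(123)}\|^2=\|T^{(1)}\|^2\|T^{(23)}\|^2$; purity fixes $\|T^{(1)}\|^2=2(d-1)/d$; and the purity expansion of $\tr\r_{23}^2$ together with $\|T^{(2)}\|^2=\|T^{(3)}\|^2=2(P-1/d)$ yields $\|T^{(23)}\|^2=4+4/d^2-8P/d\le 4(d^2-1)/d^2$, saturated at the maximally entangled case $P=1/d$. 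The product reproduces the threshold $8(d-1)(d^2-1)/d^3$ exactly, and your two points of care --- that the bound is uniform over the three cuts because the right-hand side is cut-symmetric, and that the two factors decouple because $\r_1$ is pure --- are precisely the places where a sloppier argument would go wrong. The $d=2$ sanity check ($\sqrt3$, saturated by a qubit tensored with a Bell pair, violated by GHZ with $\|T\|=2$) is also right.
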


The result shows that a high value of this measure can imply not only some entanglement but even genuine multipartite entanglement. The power of this condition increases with the subsystem
dimension improving remarkably on \cite{HuberDetection}.
By applying the condition of Theorem \ref{eq:theorem1}, all the quantities are invariant under local
unitary transformations on the density matrix, then we can detect GE of the states in Example \ref{eq:example3}.
Hence, if the lower bound is already large enough to violate the inequality
given by Theorem \ref{eq:theorem1}, then we conclude with certainty the presence of genuine multipartite entanglement 
. On the analogy of Theorem \ref{eq:theorem1}, we also hope to improve it for larger $d$.

\section{Genuine entanglement detection of  tensor product of two bipartite entangled states}
\label{sec:lower}
In this section, we consider how to construct a tripartite GE state for two bipartite entangled states $\a_{AC_1}$ and $\b_{BC_2}$ by involving the tensor product and the Kronecker product, it provides a systematical method to construct GE states. We mainly investigate the following conjecture \ref{conj:1} for bipartite entangled states.  It's known that each NPT bipartite state can be convert into an NPT Werner state by using LOCC, then we introduce the definition of Werner state at first. The Werner state on $\mathbb{C}^d\otimes \mathbb{C}^d$ is defined as
\begin{eqnarray}
\label{eq:rwpd}
\r_w(p,d)
&:=&
\frac{1}{d^2+pd}(I_d\otimes I_d+p\sum_{i,j=0}^{d-1}\ketbra{i,j}{j,i}),
\end{eqnarray}
where the parameter $p\in[-1,1]$. It has been proved \cite{DivincenzoEvidence} that $\r_w(p,d)$ is (i) separable when $p\in[-\frac{1}{d},1]$;
(ii) NPT and one-copy undistillable when $p\in[-\frac{1}{2},-\frac{1}{d})$; and (iii) NPT and one-copy distillable when $p\in[-1,-\frac{1}{2})$. Hence studying the Werner state with $p=-1/2$ and $1$ would characterize the behavior of Werner states over the whole interval of $p$. This is one of the motivations why we propose the following conjecture.
\begin{conjecture}
\label{conj:1}
(i) Suppose $\a_{AC_1}$  and $\b_{BC_2}$ are two bipartite entangled states of systems $A, C_1$ and systems $B, C_2$, respectively. Then $\a_{AC_1}\otimes\b_{BC_2}$ is a GE state of systems $A, B$ and $C$, where $C=(C_1C_2)$.

(ii) Suppose $\a_{AC_1}$ and $\b_{BC_2}$ are two-qubit entangled states. Then we have $\a_{AC_1}\otimes\b_{BC_2}$ is a tripartite state if and only if there is a neighborhood $[0,h)$, for all $\e\in[0,h)$, the state $\r_w(2,-1+\e)_{AC_1}\otimes\r_w(2,-1+\e)_{AC_1}$ is a tripartite GE state.
\end{conjecture}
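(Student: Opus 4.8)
The plan is to prove the part of Conjecture~\ref{conj:1} that is actually within reach: part~(i) is immediate when $\a_{AC_1},\b_{BC_2}$ are \emph{pure} entangled states, and the substantive content of part~(ii) is the existence of an $h>0$ such that for every $\e\in[0,h)$ the state $\r_w(-1+\e,2)_{AC_1}\otimes\r_w(-1+\e,2)_{BC_2}$, viewed as a tripartite state of $A$, $B$, $C=(C_1C_2)$, is genuinely tripartite entangled — together with the explicit value $h=0.05934$ (equivalently $p\in[-1,-0.94066]$) that appears in Theorem~\ref{eq:theorem5}. I would organise this around a soft existence step and an explicit threshold step.

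For the existence step (which simultaneously disposes of the pure-state case of part~(i)), note that $\r_w(-1,2)$ is the projector onto the one-dimensional antisymmetric subspace of $\bbC^2\otimes\bbC^2$, i.e.\ the singlet $\proj{\psi^-}$, so at $\e=0$ the tripartite state is the \emph{pure} state $\proj{\psi^-}_{AC_1}\otimes\proj{\psi^-}_{BC_2}$. Its marginals on $A$, on $B$, and on $C=(C_1C_2)$ are $I/2$, $I/2$ and $I/4$, all maximally mixed, so the state is non-product across each of the three bipartitions $A|BC$, $B|AC$, $C|AB$, hence not biseparable; the same computation works verbatim for an arbitrary pure entangled product $\a_{AC_1}\otimes\b_{BC_2}$, proving part~(i) in the pure case. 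Since the set of biseparable mixed states is the convex hull of a finite union of compact sets it is closed, so the genuinely entangled states form an open set; therefore the curve $\e\mapsto\r_w(-1+\e,2)^{\otimes2}$ stays genuinely entangled on some nontrivial interval $[0,h)$, which already establishes existence of a positive $h$.

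For the explicit threshold I would feed the state into the correlation-tensor criteria of Theorem~\ref{eq:theorem2} (the matricized Ky-Fan norm $M_k$) and Theorem~\ref{eq:theorem1}. Write $\r_w(p,2)=\tfrac{1}{4+2p}(I+pF)$ with $F$ the swap, use $F=\tfrac12(I\otimes I+\sum_i\s_i\otimes\s_i)$, and conclude that the only nonvanishing Bloch data of a two-qubit Werner state are the diagonal two-body correlations $T_{ij}=t\,\d_{ij}$ with $t=\tfrac{p}{2+p}$ (the one-body vectors vanish because the Werner marginals are maximally mixed, and $t\to-1$ as $p\to-1$). Taking the $\SU(4)$ generators on $C$ to be $\tfrac1{\sqrt2}\s_\mu\otimes\s_\nu$ with $(\mu,\nu)\neq(0,0)$, the product form of $\r=\r_w(p,2)_{AC_1}\otimes\r_w(p,2)_{BC_2}$ makes the tripartite Bloch tensor sparse: the three-body tensor $T^{(123)}$ has the $9$ nonzero entries indexed by $(i,j,(i,j))$, each of magnitude $\propto t^2$; the two-body tensors $T^{(13)},T^{(23)}$ have $3$ nonzero entries each $\propto t$; and $T^{(12)}$ together with all one-body parts vanish. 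Each matricization $T_{\underline{1}23},T_{\underline{2}13},T_{\underline{3}12}$ is then, up to a permutation of rows and columns, a scaled partial-permutation matrix, so its singular values — and hence $M_k(\r)$ and the Frobenius norm used in Theorem~\ref{eq:theorem1} — are explicit functions of $t$; substituting into (\ref{eq:ineq2}), optimising over $k$, and comparing with Theorem~\ref{eq:theorem1} gives an inequality of the form ``$|t|$ exceeds an explicit constant'', i.e.\ $\tfrac{1-\e}{1+\e}$ exceeds that constant, which rearranges to the stated bound $\e<0.05934$ (the corrected GME-concurrence bound of Theorem~\ref{thm:theorem3} provides an independent check).

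The step I expect to be the main obstacle is the bookkeeping forced by the unequal local dimensions: $A$ and $B$ are qubits while $C$ has dimension $4$, so one must either invoke a heterogeneous-dimension form of Theorems~\ref{eq:theorem2}/\ref{eq:theorem1} or embed $\bbC^2_A,\bbC^2_B\hookrightarrow\bbC^4$, and in either case carefully track how the dimension-dependent right-hand sides behave. Getting the \emph{sharp} threshold (rather than merely some $h>0$) then hinges on the best choice of $k$ and of this embedding, and on keeping the $\SU(d)$-generator normalisations consistent while reorganising $\r_w\otimes\r_w$ from the $(AC_1)(BC_2)$ grouping to the $A|B|(C_1C_2)$ grouping so that no cross terms are dropped. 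Part~(i) for genuinely mixed two-qubit (and higher) states is the deep remaining content and is not addressed by this route.
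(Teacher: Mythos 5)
The statement you are proving is labelled a conjecture, and the paper itself does not prove it: it only offers two partial results, namely Theorem~\ref{eq:theorem range} (part~(i) holds whenever the range of $\a$ or $\b$ is not spanned by product vectors, imported from an external reference) and Theorem~\ref{eq:theorem5} (an explicit correlation-tensor computation certifying genuine entanglement of $\r_w(p_1,2)\otimes\r_w(p_2,2)$ for $p_1,p_2$ near $-1$). Your proposal correctly scopes itself to the same provable fragment, but reaches the existence statement by a genuinely different and cleaner route: you observe that at $\e=0$ the state is the pure product of two singlets, that a pure state is biseparable only if some single-party or two-party marginal is pure, and that the biseparable set is compact (hence closed), so genuine entanglement persists on a neighbourhood $[0,h)$ by continuity alone. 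This topological argument appears nowhere in the paper and is the most economical way to get \emph{some} $h>0$; note also that your pure-state case of part~(i) is already subsumed by Theorem~\ref{eq:theorem range}, since the range of a pure entangled state is spanned by a single non-product vector. What your route does not buy is the quantitative content: the explicit value $h=0.05934$ is exactly what Theorem~\ref{eq:theorem5} supplies, and there your proposal reduces to a sketch of the same computation the paper performs (Bloch decomposition of $\r_w$, sparsity of the three-body tensor of a product state, singular values of the matricizations, comparison with the bound in \eqref{eq:ineq2}).

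The one place where your sketch stops short of a proof is precisely the step you flag as the obstacle: the heterogeneous dimensions. The paper resolves this by embedding $A$ and $B$ into $\bbC^4$ and exploiting $(P\otimes P\otimes I_4)\r(P\otimes P\otimes I_4)=\r$ with $P=\ketbra{0}{0}+\ketbra{1}{1}$, so that only generators supported on the qubit block contribute; your alternative basis $\tfrac{1}{\sqrt2}\s_\mu\otimes\s_\nu$ on $C$ is legitimate (the Ky Fan norms of the matricizations are invariant under orthogonal changes of the generator basis), but you would still need to carry out the singular-value computation, optimise over $k$, and verify the numerical threshold before claiming the stated interval. Your claim that each matricization is a scaled partial-permutation matrix is slightly off — the rows have three nonzero entries each with disjoint column supports, so the singular values are the row norms — though this does not affect the conclusion. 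Finally, as you acknowledge, part~(i) for mixed $\a,\b$ whose ranges \emph{are} spanned by product vectors (e.g.\ NPT Werner states themselves) is untouched by both your argument and the paper's, which is why the statement remains a conjecture.
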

One may show that Conjecture \ref{conj:1} (ii) is a special case of (i). We first consider to attack the generic one in Theorem \ref{eq:theorem range}, by introducing a known result from \cite{ShenConstruction}.
\begin{theorem}
	\label{eq:theorem range}
	Conjecture \ref{conj:1} (i) holds if the range of $\a$ or $\b$ is not spanned by product states.
\end{theorem}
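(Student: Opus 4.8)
The plan is to argue by contradiction. Write $\r_{ABC}:=\a_{AC_1}\otimes\b_{BC_2}$ with $C=(C_1C_2)$, and suppose $\r_{ABC}$ is biseparable, $\r_{ABC}=\sum_i p_i\proj{\psi_i}$, each $\ket{\psi_i}$ being a product vector across one of the cuts $A|BC$, $B|AC$, $C|AB$. By symmetry assume the hypothesis holds for $\a$: the range $S_\a:=\mathrm{range}(\a)\subseteq\cH_A\otimes\cH_{C_1}$ is not spanned by product vectors for the cut $A|C_1$. Let $P_\a\subseteq S_\a$ be the span of those product vectors and $R_\a:=S_\a\ominus P_\a\neq\{0\}$; recall that Conjecture~\ref{conj:1}(i) also supplies that $\b$ is entangled.

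The two elementary ingredients I would use are: (a) $\mathrm{range}(\r_{ABC})=S_\a\otimes S_\b$, so every $\ket{\psi_i}$ lies in this subspace; and (b) the subspace lemma that if $\sum_m\ket{x_m}\otimes\ket{y_m}\in U\otimes\cH'$ with $\{\ket{y_m}\}$ linearly independent, then every $\ket{x_m}\in U$. Using these together with the structural result of \cite{ShenConstruction} on the product vectors contained in a tensor-product subspace, the next step is to sort the $\ket{\psi_i}$. Both cuts $A|BC$ and $C|AB$ place $A$ and $C_1$ on opposite sides; expanding such a $\ket{\psi_i}$ through the Schmidt decompositions of its two tensor factors and applying (b) shows that its $AC_1$-content consists of $A|C_1$-product vectors lying in $S_\a$, hence in $P_\a$, so $\ket{\psi_i}\in P_\a\otimes\cH_{BC_2}$. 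The remaining $\ket{\psi_i}$ are $B|AC$-products, $\ket{\psi_i}=\ket{b_i}_B\otimes\ket{\chi_i}_{AC_1C_2}$.

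Then I would run a compression argument. Split $\r_{ABC}=\r_1+\r_2$, where $\r_1$ collects the $A|BC$- and $C|AB$-type terms (so $\r_1$ is supported on $P_\a\otimes\cH_{BC_2}$) and $\r_2$ collects the $B|AC$-type terms (so $\r_2$ is separable across $B|(AC_1C_2)$). Put $\Pi:=\Pi_{R_\a}\otimes I_{BC_2}$ with $\Pi_{R_\a}$ the orthogonal projector onto $R_\a$. Since $R_\a\perp P_\a$ we get $\Pi\r_1\Pi=0$, hence $\Pi\r_{ABC}\Pi=\Pi\r_2\Pi$. The left-hand side equals $(\Pi_{R_\a}\a\Pi_{R_\a})\otimes\b$, and $\Pi_{R_\a}\a\Pi_{R_\a}\neq 0$ because $R_\a$ is a nonzero subspace of $\mathrm{range}(\a)$; normalizing gives a state $\tilde\a$ with $\tilde\a\otimes\b\propto\Pi\r_2\Pi$. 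Because $\Pi$ acts only on the $AC_1$ factor, which sits on the $AC_1C_2$ side of every $B|AC$-product state, $\Pi\r_2\Pi$ is still separable across $B|(AC_1C_2)$; hence so is $\tilde\a\otimes\b$, and tracing out $AC_1$ leaves $\b$ separable across $B|C_2$ — contradicting that $\b$ is entangled. Therefore $\r_{ABC}$ admits no biseparable decomposition and is genuinely multipartite entangled.

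The main obstacle is the asymmetry hidden in the word ``or''. The naive alternative — show that $S_\a\otimes S_\b$ is not spanned by biseparable vectors and invoke that a biseparable mixed state has a biseparable-spanned range — only succeeds when both $S_\a$ and $S_\b$ fail to be product-spanned, i.e. it proves an ``and'' statement. To get ``or'' one must genuinely use the hypothesis of Conjecture~\ref{conj:1}(i) that $\b$ is entangled, and the compression $\Pi_{R_\a}$ is exactly the device that isolates a $B|(AC_1C_2)$-separable state of the form $\tilde\a\otimes\b$ and thereby pushes the contradiction onto $\b$. The other point needing care is the sorting in the second step for the $C|AB$ cut, where one cannot argue through the range of the reduced state on $AC_1$ (that range does contain non-product vectors) but must use the double Schmidt expansion together with the subspace lemma (b).
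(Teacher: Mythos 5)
Your argument is correct, and it is worth noting that the paper itself gives no proof of this theorem at all -- it simply defers to a ``known result'' in the cited reference [ShenConstruction] -- so your write-up is a genuinely self-contained derivation rather than a variant of the paper's. The two load-bearing steps both check out: (a) every vector in a convex decomposition of $\r_{ABC}=\a\otimes\b$ lies in $\mathrm{range}(\a)\otimes\mathrm{range}(\b)$, and your subspace lemma (applied after a Schmidt expansion across $C_1|BC_2$ for the $A|BC$ terms, and after the double Schmidt expansion for the $C|AB$ terms) correctly forces those terms into $P_\a\otimes\cH_{BC_2}$; and (b) the compression $\Pi=\Pi_{R_\a}\otimes I_{BC_2}$ kills $\r_1$ while $\Pi_{R_\a}\a\Pi_{R_\a}\neq0$ (if $\bra{v}\a\ket{v}=0$ for $v\in R_\a\subseteq\mathrm{range}(\a)$ then $v\in\ker\a\cap\mathrm{range}(\a)=\{0\}$), and since $\Pi$ is local with respect to the $B|(AC_1C_2)$ cut it preserves separability of $\r_2$ across that cut, so tracing out $AC_1$ from $\tilde\a\otimes\b$ forces $\b$ to be $B|C_2$-separable -- the desired contradiction with the hypothesis of Conjecture~\ref{conj:1}(i) that $\b$ is entangled. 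Your diagnosis of why the naive ``range of a biseparable state is spanned by biseparable vectors'' argument only yields the ``and'' version is also accurate; the compression onto $R_\a$ is exactly what converts the one-sided hypothesis into a statement about $\b$ alone. The only cosmetic point is the degenerate case $\r_2=0$, where $\Pi\r_{ABC}\Pi=0$ contradicts $(\Pi_{R_\a}\a\Pi_{R_\a})\otimes\b\neq0$ directly; this is implicit in your normalization step but could be stated explicitly.
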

Next, to detect the GE of tripartite state $\r_w(2,-1+\e)_{AC_1}\otimes\r_w(2,-1+\e)_{BC_2}$ in Conjecture \ref{conj:1}, we construct a special state $\r$ in Theorem \ref{eq:theorem5}.

\begin{theorem}
\label{eq:theorem5}
Consider a mixed state $\r=\frac{1-x}{64}I+x(\a_{AC_1}\otimes\b_{BC_2})$, where $\a_{AC_1}$ and $\b_{BC_2}$ are two-qubit NPT Werner state, then $\a_{AC_1}\otimes\b_{BC_2}\in \mathcal{H}_{ABC}$ is a tripartite state with $C=C_1C_2$.

(i) When $p_1=p_2=-1$, the region of the GE detection of state $\r$ is maximum for $0.902646<x\leq1$.

(ii) When $x=1$, then the GE of state $\r$ can be detected for $-1\leq p_1\leq-0.940198$, $-1\leq p_2\leq-0.94066$.

\end{theorem}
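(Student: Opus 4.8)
The strategy is to apply Theorem~\ref{eq:theorem2} directly to the explicit state $\r=\frac{1-x}{64}I+x(\a_{AC_1}\otimes\b_{BC_2})$, where $\a_{AC_1}=\r_w(p_1,2)$ and $\b_{BC_2}=\r_w(p_2,2)$. Here the three ``qudits'' of the tripartite cut are systems $A$ (dimension $2$), $B$ (dimension $2$), and $C=C_1C_2$ (dimension $4$), so Theorem~\ref{eq:theorem2} is applied with $d=4$. First I would write out the Bloch-type expansion \eqref{eq:defr} of $\a\otimes\b$ in the generators of $\SU(2)$ on $A$, on $B$, and of $\SU(4)$ on $C$. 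Because each two-qubit Werner state $\r_w(p,2)=\frac{1}{4+2p}(I\otimes I+p\sum_{ij}\ketbra{ij}{ji})$ has the very simple Bloch form $\frac14(I\otimes I+\frac{p}{2}\sum_{m=1}^{3}\s_m\otimes\s_m)$ (the $U\otimes U$ invariance kills all single-party terms and forces the two-party part to be proportional to the swap, i.e. to $\sum_m\s_m\otimes\s_m$), the product $\a\otimes\b$ has an equally transparent structure: its only nonzero correlation tensors are the ones coupling $A$ with a Pauli sector of $C_1$, $B$ with a Pauli sector of $C_2$, and the triple term coupling $A$, $B$ and the $\s_m\otimes\s_m$ sector of $C=C_1C_2$. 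Multiplying by $x$ and adding the maximally mixed part only rescales these tensors by $x$ and shifts nothing else.

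**Assembling the matricization norms.** The next step is to compute the three matricizations $T_{\underline1 23}$, $T_{\underline2 13}$, $T_{\underline3 12}$ of the full triple-correlation tensor $t_{ijk}^{123}$ (indices running over $\SU(2)$ on $A$, $\SU(2)$ on $B$, and $\SU(4)$ on $C$) and their Ky~Fan $k$-norms $\|\cdot\|_k$ for $k=1,\dots,15$. Given the product structure, $T_{\underline1 23}$ will be (up to the $x$ and normalization factors) a $3\times(3\cdot 15)$ matrix that is essentially a tensor product of the $3\times 3$ Werner correlation matrix $\frac{p_1}{2}I_3$ on the $A$–$C_1$ side with the Werner vector on the $B$–$C_2$ side; its singular values are therefore explicit products of $|p_1|$, $|p_2|$ and fixed constants, and likewise for the other two matricizations (which also pick up contributions from the $T^{(13)}$, $T^{(23)}$-type bipartite tensors repackaged into the $C$ index). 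Then $M_k(\r)=\frac13(\|T_{\underline1 23}\|_k+\|T_{\underline2 13}\|_k+\|T_{\underline3 12}\|_k)$ becomes an explicit function $M_k(x,p_1,p_2)$, linear in $x$ and bilinear-ish (through the singular values) in $(p_1,p_2)$. Setting this against the threshold $\frac{2\sqrt2}{3}(2\sqrt k+1)\frac{d-1}{d}\sqrt{\frac{d+1}{d}}$ with $d=4$, i.e.\ against $\frac{2\sqrt2}{3}(2\sqrt k+1)\cdot\frac34\cdot\sqrt{\tfrac54}$, one obtains for each $k$ a linear inequality in $x$; the binding constraint is the value of $k$ that gives the weakest bound, and optimizing over $k$ yields the stated cutoffs. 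For part~(i), set $p_1=p_2=-1$, find the $k$ that maximizes the admissible interval of $x$, and check that it produces $x>0.902646$. For part~(ii), set $x=1$, fix $p_2$ (resp.\ $p_1$) and solve the resulting inequality for $p_1$ (resp.\ $p_2$) at the optimal $k$, obtaining $p_1\le-0.940198$ and $p_2\le-0.94066$. I would also note that the two cut-offs differ slightly because of a small asymmetry in how $C_1$ versus $C_2$ enter the three matricizations, or (more likely) because one of the numbers is the symmetric-in-$p$ optimum and the other is obtained holding the first at $-1$.

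**The main obstacle.** The genuinely delicate part is the bookkeeping of the $\SU(4)$ generators on the composite system $C=C_1C_2$: one must express the operators $\s_m^{C_1}\otimes I^{C_2}$, $I^{C_1}\otimes\s_n^{C_2}$ and $\s_m^{C_1}\otimes\s_n^{C_2}$ that arise from $\a\otimes\b$ in terms of the fifteen $\l$-generators of $\SU(4)$, keeping track of the differing normalization constants ($\frac1{2d^2}$ versus $\frac1{4d}$ versus $\frac18$ in \eqref{eq:defr}, and the $\tr(\l_k\l_l)=2\d_{kl}$ convention) so that the entries $t_{ijk}^{123}$ come out with the right numerical coefficients. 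A sign or normalization slip here propagates directly into the singular values and hence into the sixth-decimal-place cutoffs $0.902646$, $0.940198$, $0.94066$, so this is where the proof must be executed with care; everything after that is a finite, if tedious, singular-value computation and a one-variable optimization over $k\in\{1,\dots,15\}$. I would organize the computation by first doing the clean case $x=1$, $p_1=p_2=-1$ as a sanity check against known facts about $\r_w(-1,2)^{\otimes 2}$, then turn on $x$, then turn on general $(p_1,p_2)$, and finally relegate the full singular-value tables to an appendix along the lines of Appendix~\ref{eq:appendix2}.
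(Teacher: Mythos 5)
Your overall strategy is the same as the paper's: apply Theorem~\ref{eq:theorem2} to $\r$, compute the Ky~Fan norms of the three matricizations of the triple correlation tensor as explicit functions of $x,p_1,p_2$, compare against the threshold, and read off the cutoffs. The gap is in how you set up the correlation tensor. You expand $A$ and $B$ in the three $\SU(2)$ generators and $C$ in the fifteen $\SU(4)$ generators, so that $T_{\underline{1}23}$ is a $3\times 45$ matrix, while still comparing against the $d=4$ threshold. Theorem~\ref{eq:theorem2} is stated for a tripartite qudit state of \emph{uniform} local dimension $d$, so this mixed $(2,2,4)$ variant is not licensed by the theorem as given. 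The paper instead embeds $A$ and $B$ into $\mathbb{C}^4$ and expands all three parties in the $\SU(4)$ generators, using $P=\ketbra{0}{0}+\ketbra{1}{1}$ to show that only $i,j\in\{1,\dots,5\}$ contribute. The two diagonal generators $\l_4,\l_5$ restrict to nonzero multiples of the identity on the qubit support, so entries such as $t_{4jk}\propto\tr\bigl(\r\,(I_A\otimes\l_j\otimes\l_k)\bigr)$ fold the lower-order correlations ($T^{(23)}$, $T^{(13)}$, $T^{(3)}$) into the triple tensor $T^{(123)}$. Your parenthetical remark that the matricizations ``pick up contributions from the $T^{(13)}$, $T^{(23)}$-type bipartite tensors'' gestures at exactly this effect, but in your $\SU(2)$ setup there is no mechanism by which those terms can enter $t^{123}_{ijk}$.

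This omission is fatal to the numerics, not merely cosmetic. Keeping only the Pauli sector on $A$ and $B$, at $p_1=p_2=-1$, $x=1$ (product of two singlets) one finds three equal singular values $\sqrt{3/2}$ for $T_{\underline{1}23}$ and $T_{\underline{2}13}$ and nine equal singular values $1/\sqrt{2}$ for $T_{\underline{3}12}$, giving $M_4\approx\frac13\bigl(2\cdot 3\sqrt{3/2}+2\sqrt{2}\bigr)\approx 3.39$, which is below the $d=4$, $k=4$ threshold $\frac52\sqrt{5/2}\approx 3.95$ (and the inequality fails for every other $k$ as well). It is precisely the extra rows and columns coming from $\l_4,\l_5$ that raise the norm to the paper's value $\max M_4(\r)=4.37918\,x$ and hence produce the cutoffs $0.902646$, $-0.940198$, $-0.94066$. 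A minor further discrepancy: the paper does not optimize over $k\in\{1,\dots,15\}$ but simply fixes $k=4$.
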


\begin{proof}
	(i) Suppose $\a_{AC_1}=\r_w(p_1,2)=\frac{1}{(4+2p_1)}(I_2\otimes I_2+p_1\sum_{i,j=0}^1\ketbra{i,j}{j,i})$ and $\b_{BC_2}=\r_w(p_2,2)=\frac{1}{(4+2p_2)}(I_2\otimes I_2+p_2\sum_{i,j=0}^1\ketbra{i,j}{j,i})$, where $p_1,p_2\in[-1,-\frac{1}{2})$.
	It shows that $\r\in\mathcal{B}(\mathcal{H}^2\otimes \mathcal{H}^2\otimes \mathcal{H}^4)$, then we have $\r\in\mathcal{B}(\mathcal{H}^4\otimes \mathcal{H}^4\otimes \mathcal{H}^4)$. The generators $\l$-matrices of $\SU(4)$ are as follows,
	\begin{widetext}
		\begin{eqnarray}
		\l_1=\bma 0&1&0&0\\1&0&0&0\\0&0&0&0\\0&0&0&0\ema,
		\l_2=\bma 0&-i&0&0\\i&0&0&0\\0&0&0&0\\0&0&0&0\ema,
		\l_3=\bma 1&0&0&0\\0&-1&0&0\\0&0&0&0\\0&0&0&0\ema,
		\l_4=\frac{1}{\sqrt{3}}
		\bma 1&0&0&0\\0&1&0&0\\0&0&-2&0\\0&0&0&0\ema,
		\end{eqnarray}
		
		\begin{eqnarray}
		\l_{5}=\frac{1}{\sqrt{6}}\bma 1&0&0&0\\0&1&0&0\\0&0&1&0\\0&0&0&-3\ema,
		\l_6=\bma 0&0&0&0\\0&0&1&0\\0&1&0&0\\0&0&0&0
		\ema,
		\l_7=\bma 0&0&1&0\\0&0&0&0\\1&0&0&0\\0&0&0&0\ema,
		\l_8=\bma 0&0&-i&0\\0&0&0&0\\i&0&0&0\\0&0&0&0\ema,
		\end{eqnarray}
		\begin{eqnarray}
		\l_9=\bma 0&0&0&-i\\0&0&0&0\\0&0&0&0\\i&0&0&0\ema,
		\l_{10}=\bma 0&0&0&0\\0&0&0&-i\\0&0&0&0\\0&i&0&0\ema,
		\l_{11}=\bma 0&0&0&0\\0&0&0&0\\0&0&0&-i\\0&0&i&0\ema,
		\l_{12}=\bma 0&0&0&0\\0&0&-i&0\\0&i&0&0\\0&0&0&0\ema,
		\end{eqnarray}
		\begin{eqnarray}
		\l_{13}=\bma 0&0&0&1\\0&0&0&0\\0&0&0&0\\1&0&0&0\ema,
		\l_{14}=\bma 0&0&0&0\\0&0&0&1\\0&0&0&0\\0&1&0&0\ema,
		\l_{15}=\bma 0&0&0&0\\0&0&0&0\\0&0&0&1\\0&0&1&0
		\ema.
		\end{eqnarray}
	\end{widetext}
	
In the following, we detect GE of $\r$ by Theorem \ref{eq:theorem2}. First, we assume that $P=\ketbra{0}{0}+\ketbra{1}{1}$, where $\ket{0}, \ket{1}\in \mathbb{C}^4$. Then we have $(P\otimes P\otimes I_4)\r(P\otimes P\otimes I_4)=\r$. From the definition of $t_{ijk}^{123}$, we obtain that
\begin{eqnarray}
\label{eq:tijk}
	t_{ijk}^{123}
	&=&
	\tr(\r(\l_i\otimes\l_j\otimes\l_k))\nonumber
	\\
	&=&
	\tr(\r(P\l_iP\otimes P\l_jP\otimes  \l_k)).
\end{eqnarray}
According to (\ref{eq:tijk}) and the form of generators $\l$-matrices, 
we have $t_{ijk}^{123}$ is nonzero when $i,j\in\{1,2,3,4,5\}$. By respectively computing $\|T_{\underline{1}23}\|_k$, $\|T_{\underline{2}13}\|_k$ and $\|T_{\underline{3}12}\|_k$, $k=4$, we obtain that
	\begin{eqnarray}
	\max M_k(\r)
	&=&
	\max\frac{1}{3}(\|T_{\underline{1}23}\|_k+\|T_{\underline{2}13}\|_k+\|T_{\underline{3}12}\|_k)\nonumber
	\\
	&=&
4.37918x
	\end{eqnarray}
	when $p_1=p_2=-1$ in Fig. \ref{fig:p9}. Suppose that $k=4$, then we have
	\begin{eqnarray}
	\max M_k(\r)-\frac{5}{2}\sqrt{\frac{5}{2}}
	&=&
	4.37918x-3.95285
	>0.
	\end{eqnarray}
	By analysing the maximum of the coefficient of $x$ in Fig.
	\ref{fig:p9}, then Theorem \ref{eq:theorem2} can detect the GE for $0.902646 <
	x \leq 1$.
	
	(ii) Consider the boundary condition of $p_1$ and $p_2$ in the region of $x$ that GE can be detected. Then in Fig. \ref{fig:p11}, we know that the boundary condition of $p_1$, $p_2$ is
	\begin{eqnarray}
	\frac{1}{3}(\|T_{\underline{1}23}\|_k+\|T_{\underline{2}13}\|_k+\|T_{\underline{3}12}\|_k)=\frac{5}{2}\sqrt{\frac{5}{2}},
	\end{eqnarray}
	where $x=1$.
	
\begin{figure}[t]
		\includegraphics[scale=0.5,angle=0]{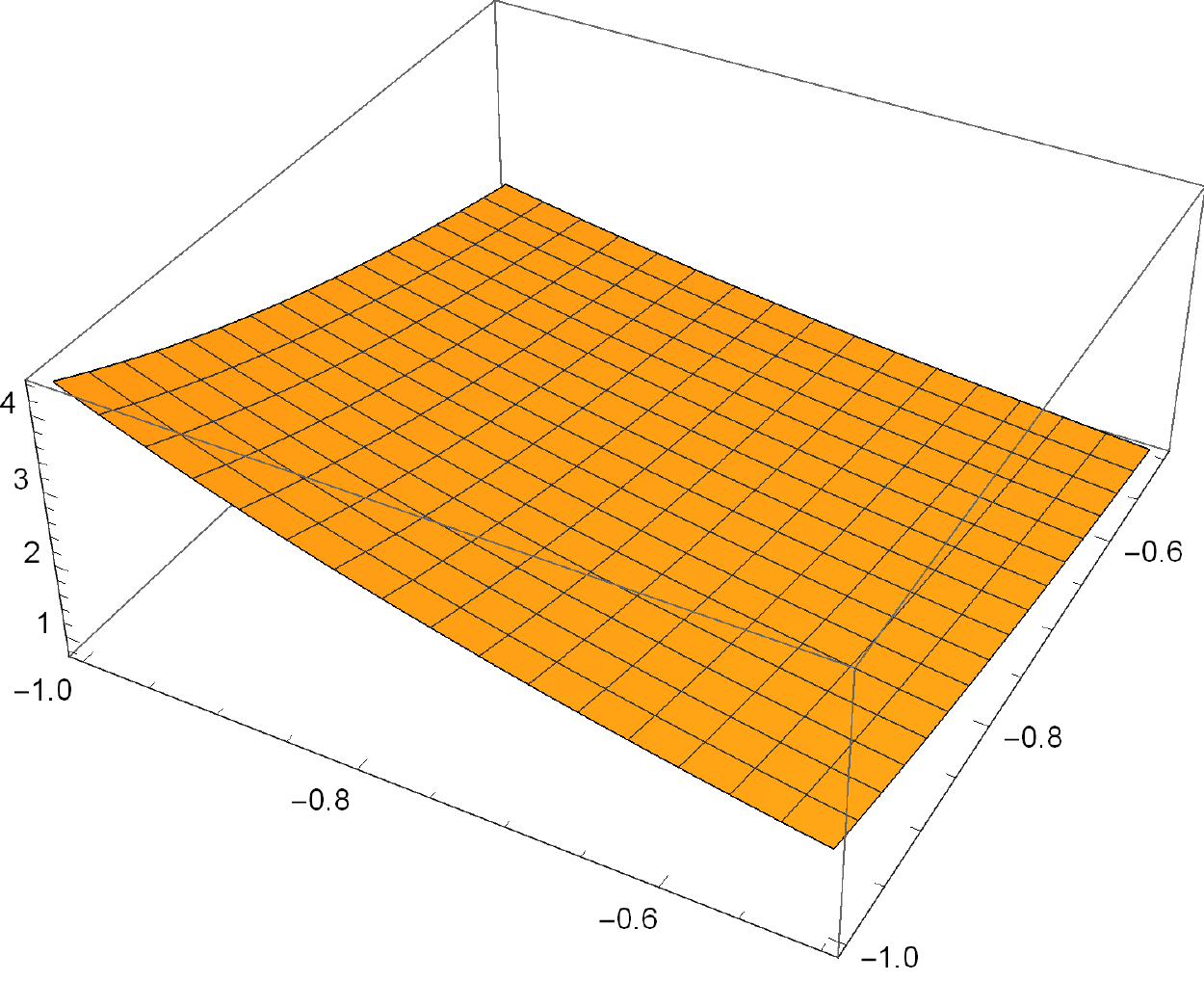}
		\caption{One finds from the figure that when $p_1\rightarrow-1, p_2\rightarrow-1$, the region of $x$ that GE of state $\r$ can be detected is increasing. Specially,  when $p_1=p_2=-1$, the $\max M_k(\r)=4.37918x$ for $0\leq x\leq1$.}
		\label{fig:p9}
	\end{figure}
	
	\begin{figure}[t]
		\includegraphics[scale=0.5,angle=0]{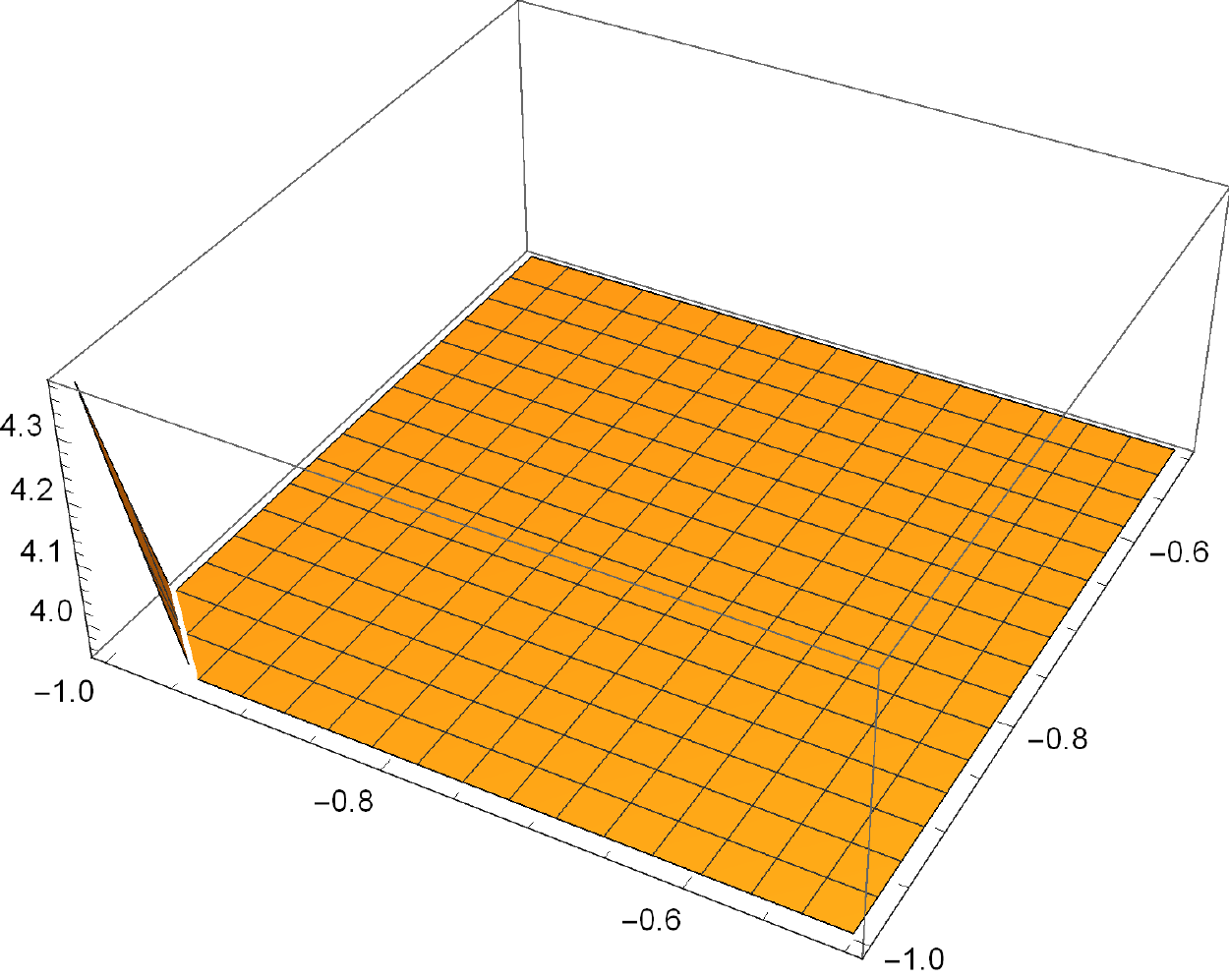}
		\caption{One finds from the figure that $-1\leq p_1\leq -0.940198$ and $-1\leq p_2\leq -0.94066$ in the region of $x$ that GE can be detected.}
		\label{fig:p11}
	\end{figure}

\end{proof}

To compare the region of $x$ that GE of state $\r$ can be detected, then we introduce  another method  in the next section. In  Appendix \ref{eq:appendix2}, by using other two different ways to detect GE of state $\r$.
Thus, we obtain that the maximum region of $x$ that can detect $\r$ by using Theorem \ref{eq:theorem2}. Let $Q=\ketbra{1}{1}+\ketbra{2}{2}$, then we have
\begin{eqnarray}
\label{eq:distillable}
(Q\otimes Q)\r_w(p,d)(Q\otimes Q)\propto\r_w(p,2), 2\leq d.
\end{eqnarray}
From (\ref{eq:distillable}), if $\r_w(p_1,2)\otimes\r_w(p_2,2)$ is a tripartite GE state,  $\r_w(p_1,d_1)\otimes\r_w(p_2,d_2)$ is also a tripartite GE state.    According to the results of Theorem \ref{eq:theorem5}, we have  $\r_w(p_1,d_1)\otimes\r_w(p_2,d_2), 2\leq d_1,d_2$ is also a tripartite GE state for the region of $p_1, p_2$ in Fig. \ref{fig:p11}.  Because NPT states can be convert into NPT Werner states by LOCC, we have obtained a large set of distillable NPT states whose tensor product is a tripartite GE state.

In the following we discuss the realization of Theorem \ref{eq:theorem5} experimentally.  The realization of two-qubit Werner states has been extensively studied in experiment over the past several years, using techniques of photon polarization, spontaneous parametric down conversion (SPDC), and semiconductor quantum dot  \cite{BarbieriGeneration,KumanoNonlocal,CinelliParametric,Zhang2002Experimental}. So it is feasible to practically detect the genuine entanglement of the state $\r_{ABC}=\a_{AC_1}\otimes\b_{BC_2}$ with the two bipartite NPT entangled states $\a$ and $\b$. First of all in theory, we respectively convert them into two Werner states under twirling operations $U\otimes U$ \cite{WernerQuantum}. This is a two-qubit local unitary operation and may be implemented effectively. Next we experimentally prepare the two Werner states in system $AC_1$ and $BC_2$, respectively, using the techniques in \cite{BarbieriGeneration,KumanoNonlocal,CinelliParametric,Zhang2002Experimental}. Note that their parameters $p_1,p_2$ in \eqref{eq:rwpd} may be influenced due to white noise. By measuring the Werner states using quantum tomography, we may determine whether the parameters $p_1,p_2$ are restricted in the interval in Theorem \ref{eq:theorem5} (ii). If this is true then we may detect their genuine entanglement using Theorem \ref{eq:theorem5} (ii). It will support Conjecture \ref{conj:1} from a practical point of view.

So far we have investigated Conjecture \ref{conj:1} for NPT entangled states $\a$ and $\b$. Finally we discuss Conjecture \ref{conj:1} when one of $\a$ and $\b$ is a PPT entangled state. It follows from Theorem \ref{eq:theorem range} that Conjecture \ref{conj:1} holds when one of $\a,\b$ satisfies that its range is not spanned by product vectors. For example, the PPT entangled states constructed from the unextendible product bases (UPBs) are such states \cite{DivincenzoUnextendible}. Another example is the so-called completely symmetric state $\a=\sum_{i=0}^6\lambda_{i}\ket{\phi_i}\bra{\phi_i}-\l\proj{\phi_7}$ \cite{ChenSeparability}, where $\l_i,\l>0$, $\ket{\phi_i} = \ket{x_i,x_i},\;\text {for } i=0,1,\ldots, 6,$
and
\begin{equation}
\label{multiSsepv5:eq:3}
\begin{split}
\ket{x_i}& = \ket{i},i=0,1,2,3\\
\ket{x_4}&= \ket{0}+\ket{1}+\ket{2}+\ket{3},\\
\ket{x_5}&= \ket{0}+2\ket{1}+3\ket{2}+4\ket{3},\\
\ket{x_6}&= \ket{0}-2\ket{1}+3\ket{2}-4\ket{3}.\\
\end{split}
\end{equation}
It has been proven that \cite{ChenSeparability} by choosing a positive constant $\l$ we obtain that $\a$ is a $4\times4$ PPT entangled state of rank six, and the range of $\a$ has no product vectors. Since the coefficients $\l_i$'s of $\s$ are arbitrary positive numbers, we have constructed a family of PPT entangled states $\a$ satisfying Conjecture \ref{conj:1}.

The third example is the state
\begin{eqnarray}
\b
\label{eq:sn2-1}
&=&(\ket{00}+\ket{11}+\ket{22})
(\bra{00}+\bra{11}+\bra{22})	
\notag\\
&+&
(\ket{01}+\ket{10}+\ket{33})
(\bra{01}+\bra{10}+\bra{33})	
\notag\\
&+&
\proj{12}+\proj{13}+
\proj{30}+\proj{21}
\notag\\
&+&
\proj{02}+\proj{20}+\proj{03}+\proj{31}.
\end{eqnarray}
It has been used for the separability criteria using symmetric extension \cite{DohertyComplete}, and the well-known PPT square conjecture recently \cite{Chen2019Positive}.
One can show that $\b$ is a PPT state of rank ten, and the range of $\b$ has only eight linearly independent product vectors. So $\b$ is entangled and violates Theorem \ref{eq:theorem range}. We have shown that such $\b$ and any state $\a$ satisfy Conjecture \ref{conj:1}.
Nevertheless, studying Conjecture \ref{conj:1} for the PPT entangled state $\b$ whose range is spanned by product vectors remains an open problem.

In the next section, we show a different method to detect the lower bound of the GE of quantum state, then we compare the region of detecting GE of state in the Example \ref{eq:example2} and Example \ref{eq:example3}.

\section{The lower bound of detecting GE of quantum state}
\label{sec:4}
In this section we investigate the genuine entanglement of tripartite states in terms of the entanglement measure. As the computation of any proper entanglement measure is in general an NP-hard problem, it is crucial for the quantification of entanglement that reliable lower bounds can be derived. In this section, we present the lower bound of detecting GE of quantum state.

The GE concurrence is proved a well-defined measure \cite{MaMeasure}. For example, the GE concurrence is defined by
$
C_{GE}(\ket{\psi})=\sqrt{\min\{1-\tr\r_1^2,1-\tr\r_2^2,1-\tr\r_3^2\}}
$ 
 for a pure state $\ket{\psi}\in\mathcal{B}(\mathcal{H}_1^d\otimes \mathcal{H}_2^d\otimes \mathcal{H}_3^d)$,
where $\r_i$ is the reduced matrix for the $i$-th subsystem. And for mixed state $\r\in\mathcal{B}(\mathcal{H}_1^d\otimes \mathcal{H}_2^d\otimes \mathcal{H}_3^d)$, the GE concurrence is  
$C_{GE}(\r)=\min\sum_{p_\a,\ket{\psi_\a}}p_\a C_{GE}(\ket{\psi_\a})$.
The minimum is taken over all pure ensemble decompositions of $\r$. We show that the GE concurrence satisfies the following fact.

\begin{theorem}
	\label{thm:theorem3}
	For a tripartite qudit state $\r$, the GE concurrence satisfies the following inequality,
	\begin{eqnarray}
	C_{GE}(\r)\geq\max\{\frac{1}{2\sqrt{2}}\|T_\a^{(123)}\|-\frac{(d-1)}{d}\sqrt{\frac{d+1}{d}},0\}.
	\notag\\
	\end{eqnarray}
\end{theorem}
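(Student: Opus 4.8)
The plan is to reduce the mixed-state bound to the pure-state case via the convex-roof construction, then prove the inequality for pure states using the tensor-norm identities established in the preliminaries. First I would recall that for any pure ensemble decomposition $\r=\sum_\a p_\a\proj{\psi_\a}$, by the triangle inequality for the Ky-Fan/Frobenius norm we have $\|T_\a^{(123)}(\r)\|\le\sum_\a p_\a\|T^{(123)}(\psi_\a)\|$, where $T^{(123)}(\cdot)$ denotes the correlation tensor of the indicated state. Hence it suffices to establish, for every pure state $\ket{\psi}$,
\begin{eqnarray}
C_{GE}(\ket{\psi})\geq\frac{1}{2\sqrt{2}}\|T^{(123)}(\psi)\|-\frac{d-1}{d}\sqrt{\frac{d+1}{d}};
\end{eqnarray}
taking the minimum over decompositions on the right and using convexity of $C_{GE}$ on the left then yields the mixed-state statement, and the $\max\{\cdot,0\}$ is automatic since $C_{GE}\ge0$.

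For the pure-state step I would first bound $\|T^{(123)}(\psi)\|$ from above in terms of the single-party purities $\tr\r_i^2$. The idea is to expand $\|T^{(123)}\|^2=\sum_{ijk}(t_{ijk})^2$ using $t_{ijk}=\tr(\psi\,\l_i\ox\l_j\ox\l_k)$ and the completeness relation $\sum_i(\l_i)_{ab}(\l_i)_{cd}=2\d_{ad}\d_{bc}-\tfrac2d\d_{ab}\d_{cd}$ for the $\SU(d)$ generators, which converts the sum over $i,j,k$ into a combination of traces of products of the reduced density matrices and partial traces of $\psi$. After the dust settles one gets an expression of the form $\|T^{(123)}\|^2 = 8\big(\tr\psi^2 - \text{(sum of pairwise/marginal purity terms)} + \text{(marginal terms)}\big)$; since $\ket{\psi}$ is pure, $\tr\psi^2=1$ and the bipartite marginals have purities equal to the complementary single-party purities, so everything collapses to a function of $\tr\r_1^2,\tr\r_2^2,\tr\r_3^2$ alone. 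One then maximizes this function subject to $\tr\r_i^2\le1$ and the constraint that the smallest of $1-\tr\r_i^2$ equals $C_{GE}(\psi)^2$; the worst case is when the two larger purities are pushed to $1$, which produces the additive constant $\frac{d-1}{d}\sqrt{\frac{d+1}{d}}$ and the coefficient $\frac1{2\sqrt2}$.

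The main obstacle I expect is the bookkeeping in that second step: correctly collecting the many contractions of $\l$-generators so that the identity-subtraction terms (the $-\tfrac2d\d_{ab}\d_{cd}$ pieces) combine into exactly the stated constant, and then carrying out the constrained optimization so that the bound is tight in the right regime. A clean way to organize it is to work with the "Bloch-vector" picture: writing each $\r_i$ via its $\SU(d)$ Bloch vector $\vec r_i$ with $\|\vec r_i\|^2=2(\tr\r_i^2-1/d)$, one can express the relevant marginal and global quantities through $\|\vec r_i\|$, after which the inequality becomes an elementary estimate. An alternative, and perhaps the route the authors take given the phrase ``correct the lower bound,'' is to isolate the single term $\|T^{(123)}\|$ from the refined version of the argument underlying Theorem \ref{eq:theorem2} (which already bundles all three matricizations), discarding the $\|T^{(12)}\|,\|T^{(13)}\|,\|T^{(23)}\|$ contributions; in that case the pure-state inequality follows from the same norm estimates used there, and the only new work is tracking the constant. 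Either way, once the pure-state bound is in hand the convexity argument of the first paragraph finishes the proof.
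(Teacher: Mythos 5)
Your proposal is correct and follows essentially the same route as the paper: a pure-state bound obtained from the Bloch-tensor expansion of $\tr\r^2=1$ together with the marginal-purity identities $\tr\r_{jk}^2=\tr\r_i^2$, followed by the convex-roof reduction using the triangle inequality for $\|T^{(123)}\|$ and $\sqrt{x-y}\ge\sqrt{x}-\sqrt{y}$. The only slip is in your description of the extremal configuration: the constant $\frac{d-1}{d}\sqrt{\frac{d+1}{d}}$ arises when \emph{one} marginal purity is pushed to $1$ and the other two down to $1/d$ (equivalently, the paper takes $\|T^{(1)}\|^2\le\frac{2(d-1)}{d}$ and drops $\|T^{(2)}\|^2+\|T^{(3)}\|^2\ge0$), not when two purities equal $1$.
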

\begin{proof}
	First, we consider pure states $\r=\ketbra{\psi}{\psi}$. We have $\tr\r^2=1$ and hence
	\begin{eqnarray}
	\label{eq:pure}
	&&
	\frac{1}{d^3}+\frac{1}{2d^2}(\sum(t_i^1)^2+\sum(t_j^2)^2+\sum(t_k^3)^2)\nonumber\\
	&+&
	\frac{1}{4d}(\sum(t_{ij}^{12})^2+\sum(t_{ik}^{13})^2+\sum(t_{jk}^{23})^2)\nonumber\\
	&+&
	\frac{1}{8}\sum(t_{ijk}^{123})^2=1.
	\end{eqnarray}
	We denote $\r_{jk}$ as the reduced density matrix for the subsystems $j\neq k=1,2,3$. Using (\ref{eq:defr}), we have
	\begin{eqnarray}
	\tr\r_1^2=\frac{1}{d}+\frac{1}{2}\|T^{(1)}\|^2,
	\end{eqnarray}
	\begin{eqnarray}
	\tr\r_{23}^2=\frac{1}{d^2}+\frac{1}{2d}(\|T^{(2)}\|^2+\|T^{(3)}\|^2)+\frac{1}{4}\|T^{(23)}\|^2,
	\end{eqnarray}
	\begin{eqnarray}
	\tr\r_2^2=\frac{1}{d}+\frac{1}{2}\|T^{(2)}\|^2,
	\end{eqnarray}
	\begin{eqnarray}
	\tr\r_{13}^2=\frac{1}{d^2}+\frac{1}{2d}(\|T^{(1)}\|^2+\|T^{(3)}\|^2)+\frac{1}{4}\|T^{(13)}\|^2,
	\end{eqnarray}
	and
	\begin{eqnarray}
	\tr\r_3^2=\frac{1}{d}+\frac{1}{2}\|T^{(3)}\|^2,
	\end{eqnarray}
	\begin{eqnarray}
	\tr\r_{12}^2=\frac{1}{d^2}+\frac{1}{2d}(\|T^{(1)}\|^2+\|T^{(2)}\|^2)+\frac{1}{4}\|T^{(12)}\|^2.
	\end{eqnarray}
	Setting $C=\|T^{(123)}\|^2$ in (\ref{eq:pure}), we obtain
	\begin{eqnarray}
	\label{eq:rr1}
	\tr\r^2-\tr\r_1^2
	&=&
	(\frac{3}{d^2}-\frac{2}{d^3}-\frac{1}{d})+\frac{d-1-d^2}{2d^2}\|T^{(1)}\|^2\nonumber\\
	&+&\frac{1}{2}(\frac{1}{d}-\frac{1}{d^2})(\|T^{(2)}\|^2+\|T^{(3)}\|^2)+\frac{1}{8}C.
	\notag\\
	\end{eqnarray}
	If we assume that $\|T^{(2)}\|^2+\|T^{(3)}\|^2=0$, and we use $\|T^{(1)}\|^2\leq\frac{2(d-1)}{d}$ of \cite{Vicente2011Multipartite}, then we obtain that $
	\tr\r^2-\tr\r_1^2
	\geq
	\frac{-(1+d)(d-1)^2}{d^3}+\frac{1}{8}C,
	$
	and the same lower bound of $\tr\r^2-\tr\r_2^2$ and $\tr\r^2-\tr\r_3^2$. Thus
	\begin{eqnarray}
	C^2_{GE}(\ket{\psi})
	&=&
	\min\{1-\tr\r_1^2,1-\tr\r_2^2,1-\tr\r_3^2\}\nonumber
	\\
	&\geq&
	\max\{\frac{-(1+d)(d-1)^2}{d^3}+\frac{1}{8}C,0\}.
	\end{eqnarray}
	Next we consider the mixed quantum state $\r\in \mathcal{B}(\mathcal{H}_1^d\otimes \mathcal{H}_2^d\otimes \mathcal{H}_3^d)$. Let $\r=\sum p_\a\ketbra{\psi_\a}{\psi_\a}$ be the optimal ensemble decomposition of $\r$. We obtain that
	\begin{widetext}
	\begin{eqnarray}
	C_{GE}(\r)
	&=&
	\sum p_\a C_{GE}(\ket{\psi_\a})\nonumber
	\\
	&\geq&
	\max\{\frac{1}{2\sqrt{2}}\|T_\a^{(123)}\|-\frac{(d-1)}{d}\sqrt{\frac{d+1}{d}},0\}\times\sum p_\a\nonumber
	\\
	&=&
	\max\{\frac{1}{2\sqrt{2}}\|T^{(123)}\|-\frac{(d-1)}{d}\sqrt{\frac{d+1}{d}},0\},
	\end{eqnarray}
	\end{widetext}
	where we have used the convexity of Frobenius norm and $\sqrt{x-y}\geq\sqrt{x}-\sqrt{y}$ for $x>y>0$.
\end{proof}

In the following we present two examples showing the power of Theorem \ref{thm:theorem3}.
\begin{example}
	\label{eq:example2}
	Consider the mixture of the GHZ state and W state in three-qubit quantum systems $\r=\frac{1-x-y}{8}I+x\ketbra{GHZ}{GHZ}+y\ketbra{W}{W}$, where $\ket{GHZ}=\frac{1}{\sqrt{2}}(\ket{000}+\ket{111})$ and $\ket{W}=\frac{1}{\sqrt{3}}(\ket{001}+\ket{010}+\ket{100})$.
\end{example}

To explain the example, we choose
$\l_1=
\bma
1&0\\0&1
\ema,
\l_2=
\bma
0&-i\\
i&0
\ema
$ and $
\l_3=
\bma
0&1\\
1&0
\ema
$
as generators $\l$-matrices of $SU(2)$. According to Theorem \ref{thm:theorem3},
then we have
\begin{eqnarray}
\label{eq:46}
\mathcal{C}_{GE}(\r)
&=&
\sum p_\a\mathcal{C}_{GE}(\ket{\psi_{\a}})\nonumber\\
&\geq&
\max\{\frac{1}{2\sqrt{2}}\|T^{(123)}(\r)\|-\frac{1}{2}\sqrt{\frac{3}{2}},0\}\nonumber
\\
&=&
\max\{\frac{1}{2\sqrt{6}}(\sqrt{12x^2+11y^2}-3),0\}.
\end{eqnarray}

From (\ref{eq:46}),  we obtain that Theorem \ref{thm:theorem3} can detect GE of state $\r$ when $\frac{1}{2\sqrt{6}}(\sqrt{12x^2+11y^2}-3)\geq0$ in Fig. \ref{fig:p2}. So we obtain that the lower bound of detecting GE of state $\r$. (see Fig. \ref{fig:p3}).
\begin{figure}[t]
	\includegraphics[scale=0.5,angle=0]{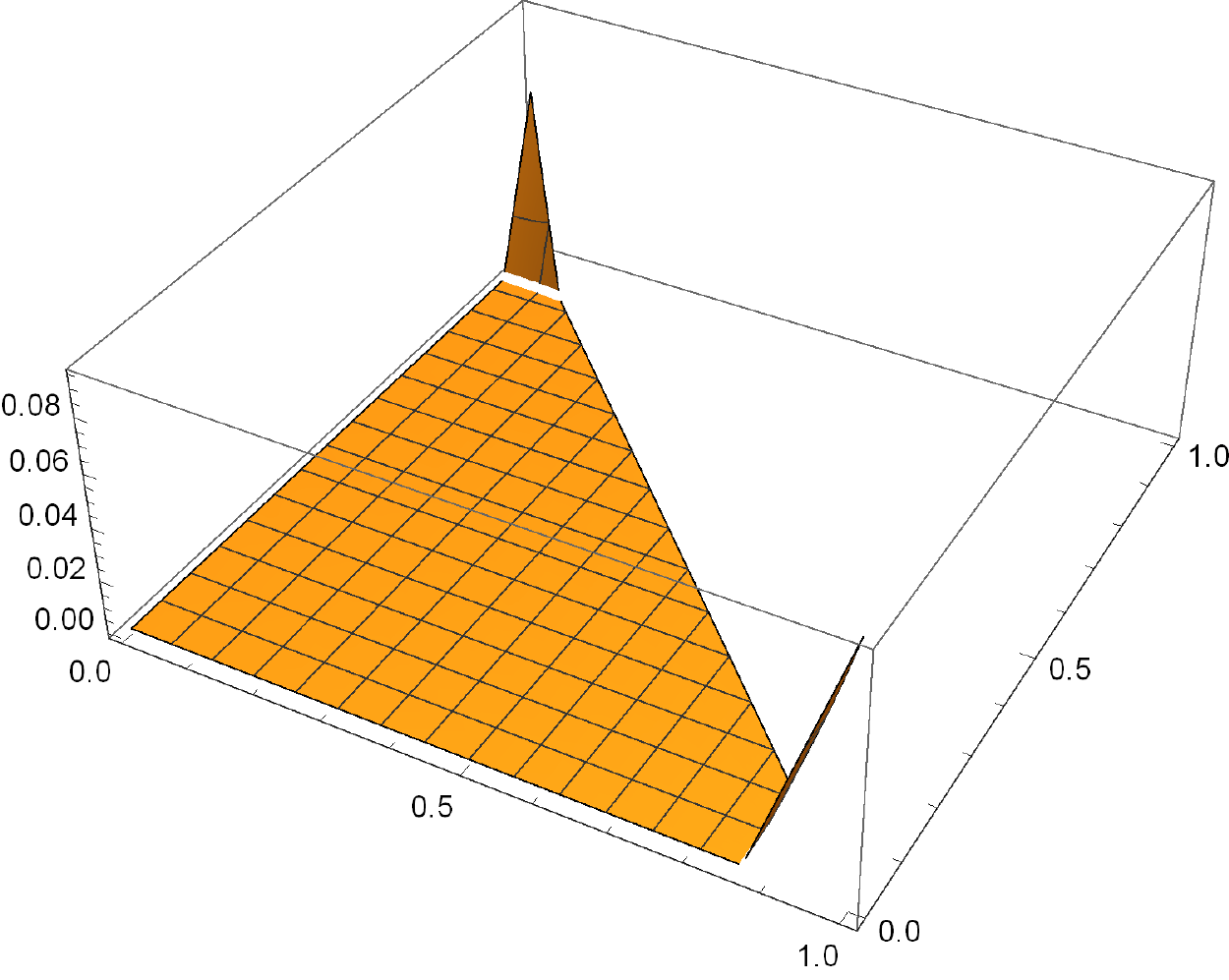}
	\caption{Lower bound of GE concurrence in Theorem \ref{thm:theorem3}} of state $\r$ in Example \ref{eq:example2}.
	\label{fig:p2}
\end{figure}
\begin{figure}[t]
	\includegraphics[scale=0.5,angle=0]{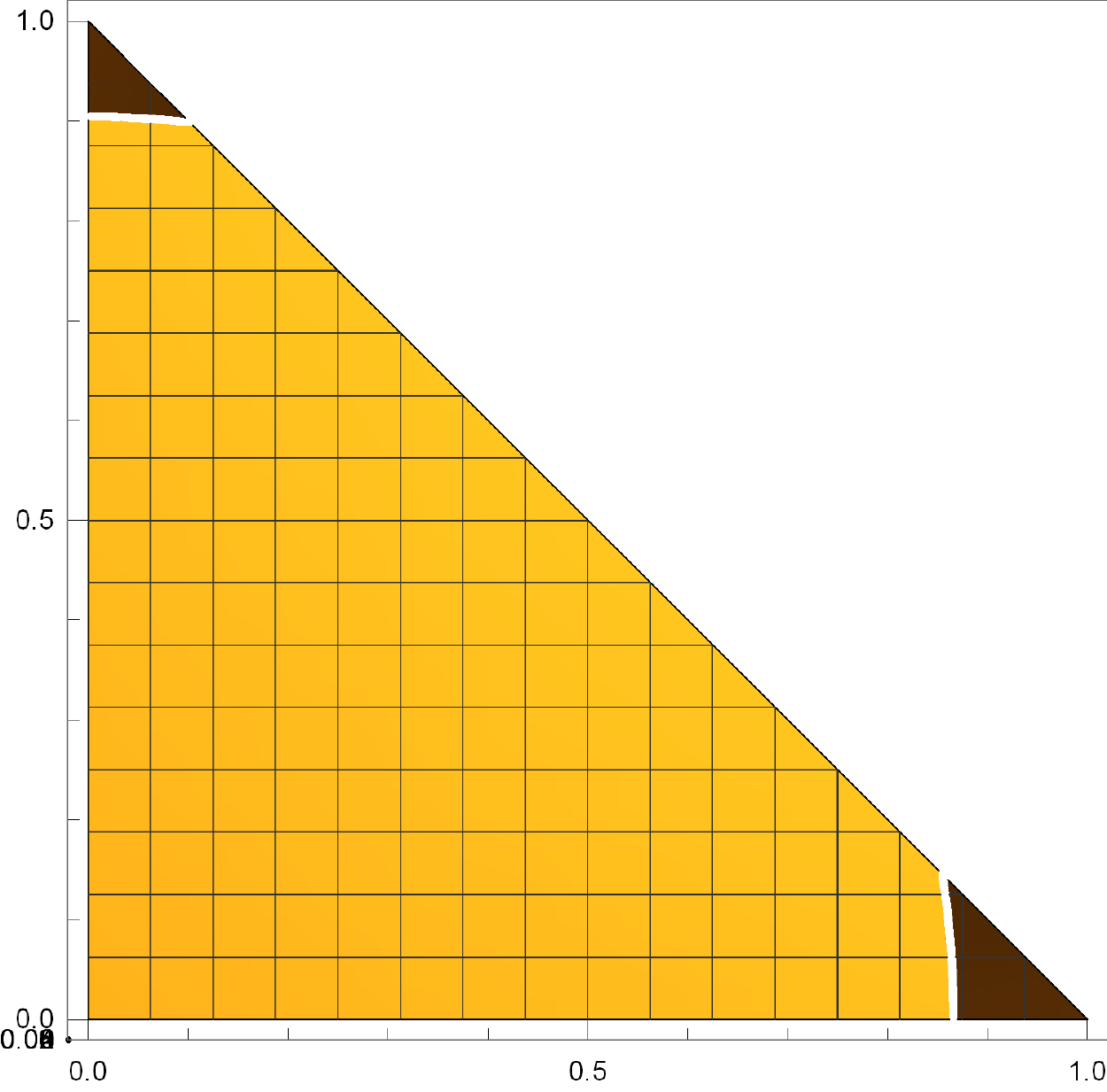}
	\caption{The lower bound for GE concurrence is equal to zero, then GE can be detected For $0.854794<x\leq1$ and $0.898272<y\leq1$ in Theorem \ref{thm:theorem3}.}
	\label{fig:p3}
\end{figure}

\begin{example}
	\label{eq:example3}
	Consider a mixed state in three-qutrit quantum systems $\r=\frac{1-x}{27}I+x\ketbra{\psi}{\psi}$, where $\ket{\psi}=\frac{1}{\sqrt{3}}(\ket{012}+\ket{021}+\ket{111})$.
\end{example}

To detect the genuine entanglement of $\r$, then we respectively use the ways in Theorem \ref{eq:theorem2}, Theorem \ref{eq:theorem1} and Theorem \ref{thm:theorem3}, the results are as follows,

(i) By using Theorem \ref{eq:theorem2}, we have
\begin{eqnarray}
\label{eq:equation1}
&&
M_k(\r)-\frac{8}{9}\sqrt{\frac{2}{3}}(1+4\sqrt{2})\nonumber
\\&&
\hspace{-0.4cm}=
\frac{1}{3}(\|T_{\underline{1}23}\|_8+\|T_{\underline{2}13}\|_8+\|T_{\underline{3}12}\|_8)-\frac{8}{9}\sqrt{\frac{2}{3}}(1+4\sqrt{2})\nonumber
\\&&
\hspace{-0.4cm}\approx
4.30179x-3.628874>0.
\end{eqnarray}

(ii) By using Theorem \ref{eq:theorem1}, we have
\begin{eqnarray}
\|T_{i_1i_2i_3}\|-\frac{8}{3}\sqrt{\frac{2}{3}}
&\approx&
2.17732(x-1)>0.
\end{eqnarray}

(iii) By using Theorem \ref{thm:theorem3}, we have
\begin{eqnarray}
C_{GE}(\r)
&\geq&
\max\{\frac{1}{2\sqrt2}\|T^{(123)}\|-\frac{2}{3}\sqrt{\frac{4}{3}},0\}\nonumber
\\
&\approx&
\max\{-0.7698 + 0.7698 x,0\}.
\end{eqnarray}

In Fig. \ref{fig:p4}, the lower bound of GE concurrence in Theorem \ref{eq:theorem2} can detect GE better than Theorem \ref{eq:theorem1} and the Theorem \ref{thm:theorem3}.

\begin{figure}[t]
	\includegraphics[scale=0.5,angle=0]{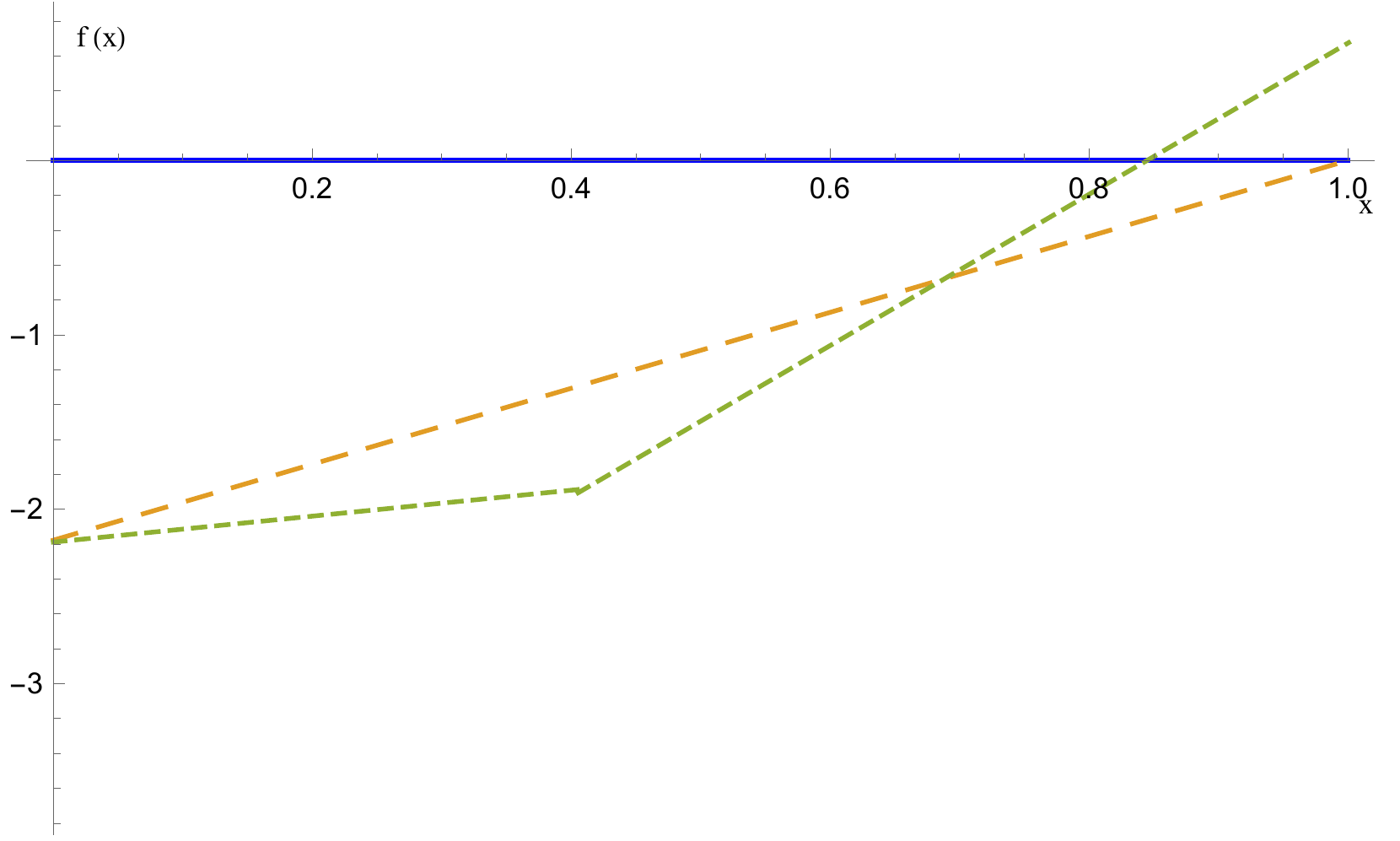}
	\caption{GE detection: dashed brown line by Theorem \ref{thm:theorem3} and solid blue line in Theorem \ref{eq:theorem1} cannot detect GE for the whole region of $x$. When $d=3, k=4$, Theorem \ref{eq:theorem2} in this paper detect GE for $0.843573<x\leq1$.}
	\label{fig:p4}
\end{figure}

To conclude, we point out that some results of this section have corrected some suspicious claims from \cite{Ming2017Measure}. First in Theorem \ref{thm:theorem3},
our article corrects the maximum lower bound for genuine multipartite entanglement concurrence in terms of the norms of the correlation tensors of $T_\a^{(123)}$. Next, in  Example \ref{eq:example2}, we have corrected the lower bound for genuine multipartite entanglement concurrence. In Example \ref{eq:example3}, we have corrected the representation of $M(\r)$ and the lower bound for GE concurrence. Then we obtain that the result of Theorem \ref{thm:theorem3} cannot detect tripartite GE in  Example \ref{eq:example3}.

\section{Conclusions}
\label{sec:con}
	
In this paper, we have asked whether the product of two entangled states $\a_{AC_1}$, $\b_{BC_2}$ is still tripartite genuine entangled, which has been formulated by Conjecture \ref{conj:1}. Then we have discussed the realization of Conjecture \ref{conj:1} in experiment by using two Werner states. By restricting the interval of parameter $p_1$ and $p_2$, we can detect the tripartite genuine entanglement.
Besides, we also have corrected the lower bound for genuine multipartite entanglement concurrence of any quantum states, and have detected genuine entanglement in two examples by using it.

A direct open problem from this paper is to keep studying Cojecture \ref{conj:1} for region of parameter $p$ of two bipartite NPT states and more general cases. However, it is also very interesting to find out a counterexample, because it shows the physical difference between bipartite and tripartite genuine entanglement.

\section*{Acknowledgments}
	\label{sec:ack}	
	
Authors were supported by the  NNSF of China (Grant No. 11871089), and the Fundamental Research Funds for the Central Universities (Grant Nos. KG12080401 and ZG216S1902).

\appendix


	
	
\section{GE detection of $\r$ by Other ways compare with Theorem \ref{eq:theorem5}}
\label{eq:appendix2}	
To compare the region of $x$ for detecting GE of state $\r$ in Theorem \ref{eq:theorem5}, there are two different ways (i), (ii) as follows,

(i) By using  Theorem \ref{eq:theorem1}, we have
\begin{eqnarray}
\label{eq:eg101}
\|T_{i_1i_2i_3}\|-\frac{3}{2}\sqrt{\frac{5}{2}}
&=&
\sqrt{3}\frac{\sqrt{p_2^2+p_1p_2^2+p_1^2(1+p_2+2p_2^2)}}{(2+p_1)(2+p_2)}x-\frac{3}{2}\sqrt{\frac{5}{2}}>0
\notag\\
\end{eqnarray}
where $p_1,p_2\in[-1,-\frac{1}{2})$. To maximum detect the GE, then we find the maximum slope of $x$ in (\ref{eq:eg101}). By analysing the maximum of the coefficient of $x$, we have the maximum of $\sqrt{3}\frac{\sqrt{p_2^2+p_1p_2^2+p_1^2(1+p_2+2p_2^2)}}{(2+p_1)(2+p_2)}x$ is $\sqrt{6}$, when $p_1=p_2=-1$. then Theorem \ref{eq:theorem1} can detect the GE of state $\r$  for $0.968246<x\leq1$.

\begin{figure}[t]
	\includegraphics[scale=0.6,angle=0]{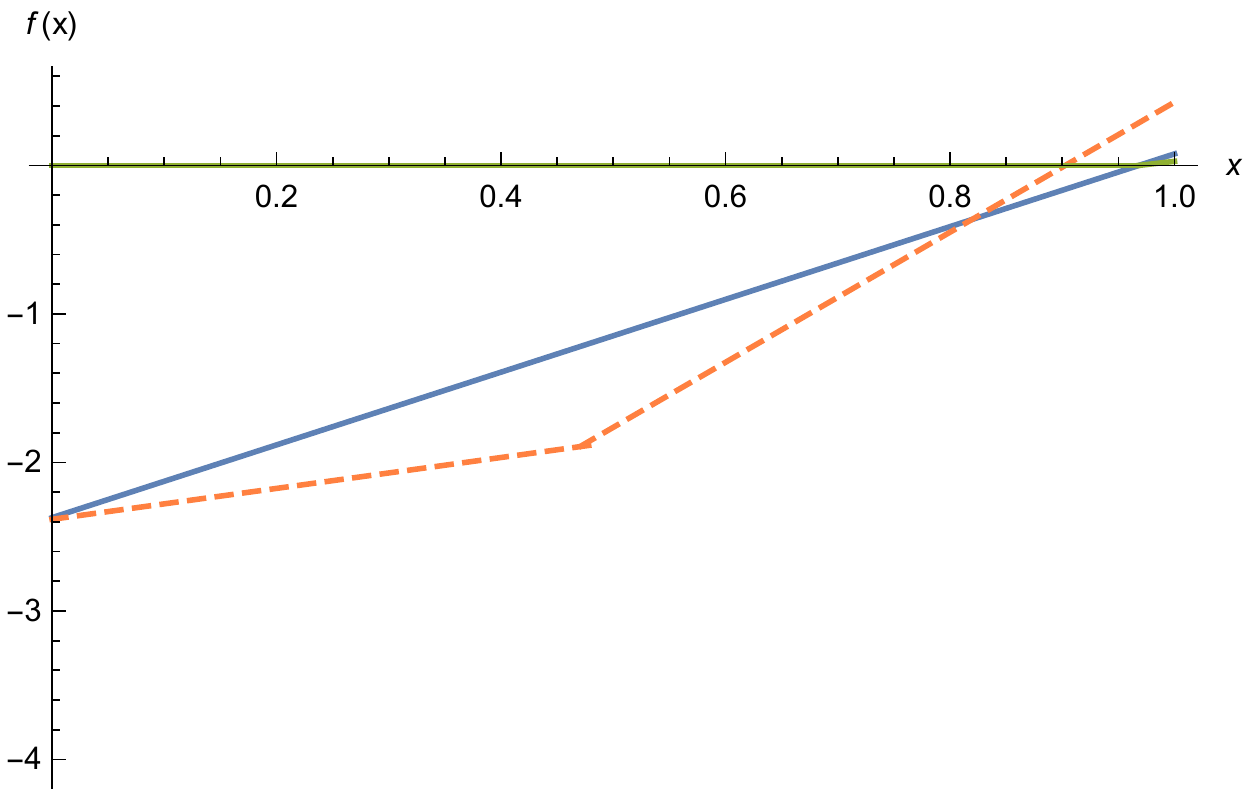}
	\caption{Under the condition that $p_1=p_2=-1$, then we have the region of detecting GE of state $\r$ is maximum. One finds in this figure that solid green line by Theorem \ref{thm:theorem3} and solid blue line in Theorem \ref{eq:theorem1} can detect GE of $\r$ in the same region $(0.968246,1]$. The dashed brown line by Theorem \ref{eq:theorem2} can detect GE of $\r$ in $(0.902646,1]$.}
\label{fig:p5}
\end{figure}

On the other hand, in the region of $x$ that GE of state $\r$ can be detected, we consider the boundary condition that can detect GE of state $\r$. From (\ref{eq:eg101}), we obtain that the GE of $\r$ can be detected when
\begin{eqnarray}
\sqrt{3}\frac{\sqrt{p_2^2+p_1p_2^2+p_1^2(1+p_2+2p_2^2)}}{(2+p_1)(2+p_2)}x\geq\frac{3}{2}\sqrt{\frac{5}{2}}.
\end{eqnarray}

Then we have the boundary condition that can detect GE of state $\r$ is
\begin{eqnarray}
\label{eq:boundary1}
\sqrt{3}\frac{\sqrt{p_2^2+p_1p_2^2+p_1^2(1+p_2+2p_2^2)}}{(2+p_1)(2+p_2)}x=\frac{3}{2}\sqrt{\frac{5}{2}}.
\end{eqnarray}
From (\ref{eq:boundary1}), then                                                                                                                                                                                                                                                                                                                                                                                                                                                                                                                                                                                                                                                                                                                                                                                                                                                                                                                                                                                                                                                                                                                                                                                                                                                                                                                                                                                                                                                                                                                                                                                                                                                                                                                                                                                                                                                                                                                                                                                                                                                                                                                                                                                                                                                                                                                                                                                                                                                                                                                                                                                                                                                                                                                                                                                                                                                                                                                                                                                                                                                                                                                                                                      we obtain that negative correlation between $p_1$ and $p_2$ for $-1\leq p_1,p_2\leq-0.981475$. In the region of $p_1, p_2$, the GE of state $\r$ can be detected. Specially, when $p_1=p_2=-1$, the region of GE detection of state $\r$ is maximum.




(ii) By using Theorem \ref{thm:theorem3}, we have
\begin{eqnarray}
\|T^{(123)}\|=\sqrt{3}\frac{\sqrt{p_2^2+p_1p_2^2+p_1^2(1+p_2+2p_2^2)}}{(2+p_1)(2+p_2)}x.
\end{eqnarray}
Then we obtain
\begin{eqnarray}
\label{eq:GE1}
C_{GE}(\r)
&\geq&
\max\{\frac{1}{2\sqrt{2}}\|T^{(123)}\|-\frac{3}{4}\sqrt{\frac{5}{4}},0\}\nonumber
\\
&=&
\max\{\frac{\sqrt{3}}{2\sqrt{2}}\frac{\sqrt{p_2^2+p_1p_2^2+p_1^2(1+p_2+2p_2^2)}}{(2+p_1)(2+p_2)}x-\frac{3}{4}\sqrt{\frac{5}{4}},0\}.
\notag\\
\end{eqnarray}
From the maximum of $\sqrt{3}\frac{\sqrt{p_2^2+p_1p_2^2+p_1^2(1+p_2+2p_2^2)}}{(2+p_1)(2+p_2)}x=\sqrt{6}$ when $p_1=p_2=-1$, then we have $\max\|T^{(123)}\|=\sqrt{6}x$. So we obtain that Theorem \ref{thm:theorem3} can detect the GE of $\r$ for $0.968246<x\leq1$. 

Thus, by comparing the region of $x$ for detecting GE of state $\r$ in Fig. \ref{fig:p5}, we obtain that the Theorem \ref{eq:theorem5} can detect GE of state $\r$ for the maximum region of $x$.

\bibliographystyle{unsrt}

\bibliography{ge_error_of_liming}

\end{document}